\documentclass[11pt]{article}

\usepackage[left=1in, right=1in, top=1in, bottom=1in]{geometry}
\usepackage[group-separator={,}]{siunitx}
\usepackage{amsmath,amsfonts,amssymb, amsthm}
\usepackage{graphicx}
\usepackage{url}
\usepackage{mathtools}
\usepackage[textsize=tiny,textwidth=1cm,shadow]{todonotes}
\usepackage{thmtools}
\usepackage{enumitem}
\usepackage{pdflscape}
\usepackage{verbatim}
\usepackage{nccmath}
\usepackage[utf8]{inputenc}
\usepackage{array}
\definecolor{MyBlue}{rgb}{0.12, 0.12, 0.76}

\usepackage{algorithm}
\usepackage{algpseudocode}
\usepackage{algorithmicx}
\let\oldReturn\Return
\renewcommand{\Return}{\State\oldReturn}
\algtext*{EndWhile}
\algtext*{EndIf}
\algtext*{EndForAll}
\algtext*{EndFor}
\algtext*{EndFunction}

\usepackage{pifont}
%
%

\usepackage{pgf}
\usepackage{tikz}
\usetikzlibrary{shadows,arrows,decorations,decorations.shapes,backgrounds,shapes,snakes,automata,fit,petri,shapes.multipart,calc,positioning,shapes.geometric,graphs,graphs.standard}

\makeatletter
\newcommand{\thickhline}{%
    \noalign {\ifnum 0=`}\fi \hrule height 1.4pt
    \futurelet \reserved@a \@xhline
}
\newcolumntype{"}{@{\hskip\tabcolsep\vrule width 1.4pt\hskip\tabcolsep}}
\makeatother

\allowdisplaybreaks[1] 

\newtheorem{theorem}{Theorem}[section]
\newtheorem{lemma}{Lemma}[section]




\usepackage{natbib}
\usepackage{tablefootnote} 
\usepackage{epsfig}
\usepackage[colorlinks,allcolors=MyBlue]{hyperref}

\newlist{exlist}{enumerate}{1}
\setlist[exlist]{label=(\alph*)}

\newcommand{\eps}{\epsilon}

\newcommand{\prob}[2][]{\mathbf{Pr}\ifthenelse{\not\equal{}{#1}}{_{#1}}{}\!\left[#2\right]}
\newcommand{\expect}[2][]{\mathbf{E}\ifthenelse{\not\equal{}{#1}}{_{#1}}{}\!\left[#2\right]}

\newcommand{\prize}{v}
\newcommand{\prizes}{{\mathbf \prize}}
\newcommand{\prizei}[1][i]{{\prize_{#1}}}

\newcommand{\gdist}{D}
\newcommand{\gdists}{{\mathbf \gdist}}

\newcommand{\G}{\mathcal{G}}
\newcommand{\D}{\mathcal{D}}

\begin{document}

\title{Distributional Analysis\thanks{Chapter~8 of the book {\em Beyond the
      Worst-Case Analysis of Algorithms}~\citep{bwca}.}}
\author{Tim Roughgarden\thanks{Department of Computer Science,
    Columbia University.  Supported in part by NSF award
    CCF-1813188 and ARO award W911NF1910294.  Email: \texttt{tim.roughgarden@gmail.com.}}}

\maketitle

\begin{abstract} In {\em distributional} or {\em average-case analysis},
the goal is to design an algorithm with good-on-average performance
with respect to a
  specific probability distribution.  Distributional analysis can be
  useful for the study of general-purpose algorithms on
  ``non-pathological'' inputs, and for the design of specialized
  algorithms in applications in which there is detailed understanding of
  the relevant input distribution.  
For some problems, however, pure distributional analysis encourages
``overfitting'' an algorithmic solution to a particular distributional
assumption and a more robust analysis framework is called for.  This
chapter presents numerous examples of the pros and cons of distributional
analysis, highlighting some of its greatest hits while also setting
the stage for the hybrids of worst- and average-case analysis studied
in later chapters.
\end{abstract}

\section{Introduction}

Part~I of this book covered refinements of worst-case analysis which
do not impose any assumptions on the possible inputs.  Part~II
described several deterministic models of data, in which inputs to a
problem were restricted to those with properties that are plausibly
shared by all ``real-world'' inputs.  This chapter, and a
majority of the remaining chapters in the book, consider models that
include a {\em probability distribution} over inputs.  

\subsection{The Pros and Cons of Distributional Analysis}

In its purest form, the goal in distributional analysis is to analyze
the average performance of algorithms with respect to a specific 
input distribution, and perhaps also to design new algorithms that
perform particularly well for this distribution.
What do we hope to gain from such an analysis?
\begin{itemize}

\item In applications in which the input distribution is well understood
  (e.g., due to lots of recent and representative data),
  distributional analysis is well suited both to predict the
  performance of existing algorithms and to design algorithms
  specialized to the input distribution.

\item When there is a large gap between the empirical and worst-case
  performance of an algorithm, an input distribution can serve as a
  metaphor for ``non-pathological'' inputs.  Even if the input
  distribution cannot be taken literally, a good average-case
  bound is a plausibility argument for the algorithm's empirical
  performance.  The three examples in Section~\ref{s:classical} are in
  this spirit.

\item Optimizing performance with respect to a specific input
  distribution can lead to new algorithmic ideas that are
  useful much more broadly.  The examples in Sections~\ref{s:euclid}
  and~\ref{s:graph} have this flavor.

\end{itemize}
And what could go wrong?
\begin{itemize}

\item Carrying out an average-case analysis of an algorithm might be
  analytically tractable only for the simplest (and not necessarily
  realistic) input distributions.

\item Optimizing performance with respect to a specific input
  distribution can lead to ``overfitting,'' meaning algorithmic
  solutions that are overly reliant on the details of the
  distributional assumptions and have brittle performance guarantees
  (which may not hold if the distributional assumptions are violated).

\item Pursuing distribution-specific optimizations can distract from
  the pursuit of more robust and broadly useful algorithmic
  ideas.

\end{itemize}
This chapter has two goals.  The first is to celebrate a
few classical results in the average-case analysis of algorithms,
which constitute some of the earliest work on alternatives to
worst-case analysis.  Our coverage here is far from encyclopedic, with
the discussion confined to a sampling of relatively simple results for
well-known problems that contribute to the chapter's overarching
narrative.  The second goal is to examine critically such average-case
results, thereby motivating the more robust models of distributional
analysis outlined in Section~\ref{s:rda} and studied in detail later
in the book.

\subsection{An Optimal Stopping Problem}\label{ss:stop}

The pros and cons of distributional analysis are evident in a famous
example from optimal stopping theory, which is interesting in its own
right and also relevant to some of the random-order models described
in Chapter~11.  Consider a game with~$n$ stages.  Nonnegative prizes
arrive online, with~$v_i$ denoting the value of the prize that appears
in stage~$i$.  At each stage, an algorithm must decide between
accepting the current prize (which terminates the game) and proceeding
to the next stage after discarding it.  This involves a difficult
trade-off, between the risk of being too ambitious (and skipping over
what turns out to be the highest-value prize) and not ambitious enough
(settling for a modest-value prize instead of waiting for a better
one).

Suppose we posit specific distributions~$D_1,D_2,\ldots,D_n$, known in
advance to the algorithm designer, such that the value~$v_i$ of the
stage-$i$ prize is drawn independently from~$D_i$.  (The $D_i$'s may
or may not be identical.)  An algorithm learns the realization~$v_i$
of a prize value only at stage~$i$.  We can then speak about an {\em
  optimal} algorithm for the problem, meaning an online algorithm that
achieves the maximum-possible expected prize value, where the
expectation is with respect to the assumed 
distributions~$D_1,D_2,\ldots,D_n$.

The optimal algorithm for a given sequence of prize value
distributions is easy enough to specify, by working backward in time.
If an algorithm finds itself at stage~$n$ without having accepted a
prize, it should definitely accept the final prize.  (Recall all prizes have
nonnegative values.)  At an earlier stage~$i$, the algorithm should
accept the stage-$i$ prize if and only if $v_i$ is at least the
expected prize value obtained by the (inductively defined) optimal
strategy for stages~$i+1,i+2,\ldots,n$.

\subsection{Discussion}

The solution above illustrates the primary advantages of distributional
analysis: an unequivocal definition of an ``optimal'' algorithm, and
the possibility of a crisp characterization of such an algorithm
(as a function of the input distributions).

The disadvantages of average-case analysis are also on display, and
there are several reasons why one might reject this optimal
algorithm.
\begin{enumerate}

\item The algorithm takes the distributional assumptions literally and
  its description depends in a detailed way on the assumed
  distributions.  It is unclear how robust the optimality guarantee is
  to misspecifications of these distributions, or to a reordering of
  the distributions.

\item The algorithm is relatively complicated, in that it is defined
  by $n$ different parameters (one threshold for each stage).

\item The algorithm does not provide any qualitative advice about how
  to tackle similar problems (other than ``work backwards'').

\end{enumerate}
The third point is particularly relevant when studying a problem chosen
as a deliberate simplification of a ``real-world'' problem that is too
messy to analyze directly.  In this case, an optimal solution to the
simpler problem is useful only inasmuch as it suggests a plausibly
effective solution to the more general problem.

For our optimal stopping problem, could there be non-trivial guarantees
for simpler, more intuitive, and more robust algorithms?

\subsection{Threshold Rules and the Prophet Inequality}

Returning to the optimal stopping problem of Section~\ref{ss:stop},
a {\em threshold stopping rule} is defined by a single parameter, a
threshold~$t$.  The corresponding online algorithm accepts the first
prize~$i$ with value satisfying~$v_i \ge t$ (if any).  Such a rule is
clearly suboptimal, as it doesn't even necessarily accept the prize at
stage~$n$.  Nevertheless, the following {\em prophet inequality}
proves that there is a threshold strategy with an intuitive threshold
that performs approximately as well as a fully clairvoyant
prophet.\footnote{See Chapter~11 for an analogous result for a related
  problem, the {\em secretary problem}.}
\begin{theorem}[\citet{SC84}]\label{t:pi}
  For every sequence $\gdists=D_1,D_2,\ldots,D_n$ of independent prize value
  distributions, there is a threshold rule that guarantees expected
  reward at least
  $\tfrac{1}{2} \expect[\prizes \sim \gdists]{\max_i \prizei}$, where
  $\prizes$ denotes $(v_1,\ldots,v_n)$.
\end{theorem}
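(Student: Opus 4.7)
The plan is to exhibit an explicit, intuitive threshold and compare the resulting algorithm's expected reward to $M := \expect[\prizes\sim\gdists]{\max_i \prizei}$ via a two-sided sandwich: lower bound the algorithm in terms of a ``base payment'' $t$ plus the expected ``excess over $t$'', and upper bound $M$ in terms of the same quantities. The natural choice is $t = M/2$ (an alternative is to pick $t$ so that $\prob{\exists i : \prizei \ge t} = \tfrac{1}{2}$; both give the same factor~$\tfrac{1}{2}$).

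First I would write out the algorithm's expected reward stage by stage. Let $q_i = \prob{\prizei[j] < t \text{ for all } j < i}$ be the probability of reaching stage~$i$, and let $A$ be the event that the algorithm accepts some prize. Using independence of the $\prizei$'s and the decomposition $\prizei\,\mathbf{1}[\prizei\ge t] = t\,\mathbf{1}[\prizei\ge t] + (\prizei - t)^+$, I get
\begin{equation*}
\expect{\text{ALG}} \;=\; \sum_{i=1}^n q_i\bigl(t\,\prob{\prizei\ge t} + \expect{(\prizei - t)^+}\bigr).
\end{equation*}
The telescoping identity $q_i - q_{i+1} = q_i\,\prob{\prizei\ge t}$ turns the first term into $t\,\prob{A}$. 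For the second term I would use the monotonicity $q_i \ge q_{n+1} = 1 - \prob{A}$ to lower bound each $q_i$ uniformly, producing
\begin{equation*}
\expect{\text{ALG}} \;\ge\; t\,\prob{A} + (1-\prob{A})\sum_{i=1}^n \expect{(\prizei - t)^+}.
\end{equation*}

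Next I would upper bound the prophet's reward by the trivial inequality $\max_i \prizei \le t + \max_i (\prizei - t)^+ \le t + \sum_i (\prizei - t)^+$, taking expectations to conclude $\sum_i \expect{(\prizei - t)^+} \ge M - t$. Plugging this in and setting $t = M/2$, the $\prob{A}$ and $1-\prob{A}$ terms conveniently combine to give $\expect{\text{ALG}} \ge t\prob{A} + (1-\prob{A})(M-t) = M/2$, as desired.

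The only delicate step, and the one I expect to be the main obstacle, is the lower bound on $q_i$: it uses that the event ``algorithm has not yet stopped before stage~$i$'' contains the event ``algorithm never accepts any prize''. It is easy to be sloppy and try to replace $q_i$ inside the sum by $\prob{A}$ instead of $1-\prob{A}$, which would break the argument; the key observation is that the $(1-\prob{A})$ weight on the ``excess'' term is exactly what lets it combine cleanly with the $t\,\prob{A}$ base term when $t=M/2$. Everything else is either a direct calculation or a one-line bound on $\max_i \prizei$.
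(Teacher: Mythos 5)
Your proof is correct and arrives at the same sandwich as the paper: a lower bound of the form $\prob{A}\,t + (1-\prob{A})\sum_i \expect{(v_i-t)^+}$ on the threshold rule's reward, and the upper bound $t + \sum_i \expect{(v_i-t)^+}$ on $M = \expect{\max_i v_i}$ (your $A$ is the event that the rule accepts some prize, so $1-\prob{A}$ is the paper's $q(t)$). Two differences are worth noting. First, you derive the lower bound by writing the algorithm's reward exactly as $\sum_i q_i\bigl(t\,\prob{v_i\ge t} + \expect{(v_i-t)^+}\bigr)$, telescoping the first piece via $q_i-q_{i+1}=q_i\,\prob{v_i\ge t}$, and then applying the single monotonicity bound $q_i\ge q_{n+1}=1-\prob{A}$; the paper instead credits the rule with $t$ whenever it accepts, adds extra credit $(v_i-t)^+$ only on the event that exactly one prize clears $t$, and then discards a conditioning event. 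Your route uses one inequality where the paper uses two and makes transparent exactly where the $(1-\prob{A})$ weight on the excess term comes from. Second, you set $t=M/2$ rather than the paper's choice of $t$ with $q(t)=\tfrac12$. Both give the factor $\tfrac12$; $t=M/2$ always exists, so it quietly disposes of the point-mass technicality the paper defers to Exercise~\ref{exer:prophetmass}, whereas the paper's median-of-the-max choice has the compensating virtue of depending on the distributions only through a single quantile, which is what underwrites the chapter's qualitative advice to pick a threshold that you clear with probability one-half.
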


This guarantee holds, in particular, for the threshold~$t$ at which
there is a 50/50 chance that the rule accepts one of the~$n$ prizes.

\begin{proof}
Let $z^+$ denote $\max\{z,0\}$.  Consider a threshold strategy with
threshold~$t$ (to be chosen later).  
The plan is to prove a lower bound on the expected value of this
strategy and an upper bound on the expected value of a prophet such
that the two bounds are easy to compare.

What value does the $t$-threshold strategy obtain?  
Let $q(t)$ denote the probability of the failure mode where the
threshold strategy accepts no prize at all; in this case, it obtains 
zero value.  With the remaining probability $1-q(t)$, 
the rule obtains value at least~$t$.  To 
improve this lower bound, consider the case in which exactly one prize~$i$
satisfies $v_i \ge t$; then, the rule also gets ``extra credit'' of $v_i-t$
above and beyond its baseline value of $t$.\footnote{The difficulty when
two prizes~$i$ and~$j$ exceed the threshold is that this extra credit
is either $v_i-t$ or $v_j-t$ (whichever appeared earlier).  The proof
avoids reasoning about the ordering of the distributions by crediting
the rule only with the baseline value of~$t$ in this case.}

Formally, we can bound the expected value obtained by the
$t$-threshold strategy from below by
\begin{align}
\nonumber
&\ \qquad\qquad \qquad\qquad \qquad 
(1-q(t)) \cdot t~+ \\
\label{eq:t1}
&\ \sum_{i=1}^n \expect[\prizes]{v_i-t \,|\, v_i \ge t, v_j < t
  \,\,\forall j \neq i} \cdot \prob{v_i \ge t} 
 \cdot \prob{v_j < t \,\, \forall
  j \neq i}\\
\label{eq:t2}
 = &\
(1-q(t)) \cdot t + \sum_{i=1}^n \underbrace{\expect[\prizes]{v_i-t \,|\,
     v_i \ge t} \cdot
\prob{v_i \ge t}}_{=\expect{(v_i-t)^+}} \cdot
  \underbrace{\prob{v_j < t \,\, \forall   j \neq    i}}_{\ge q(t)} 
  \\
\label{eq:t3}
 \ge &\
(1-q(t)) \cdot t + q(t) \sum_{i=1}^n \expect[\prizes]{(v_i-t)^+} ,
\end{align}
where we use the independence of the $D_i$'s
in~\eqref{eq:t1} 
to factor
the two probability terms and 
in~\eqref{eq:t2} 
to drop the
conditioning on the event that $v_j < t$ for every $j \neq i$.
In~\eqref{eq:t3}, we use that $q(t) = \prob{v_j < t \,\, \forall j}
\le \prob{v_j < t \,\,\forall j \neq i}$.

Now we produce an upper bound on the prophet's expected value
$\expect[\prizes \sim \gdists]{\max_i v_i}$ that is easy to compare to~\eqref{eq:t3}.
The expression $\expect[\prizes]{\max_i v_i}$ doesn't reference the
strategy's threshold $t$, so we add and subtract it to derive
\begin{eqnarray}
\nonumber
\expect[\prizes]{\max_{i=1}^n v_i} & = & 
\expect[\prizes]{t + \max_{i=1}^n (v_i - t)}\\
\nonumber
& \le & t + \expect[\prizes]{\max_{i=1}^n (v_i - t)^+}\\
\label{eq:ub}
& \le & t + \sum_{i=1}^n \expect[\prizes]{(v_i - t)^+}.
\end{eqnarray}
Comparing~\eqref{eq:t3} and~\eqref{eq:ub}, we can complete the proof
by setting~$t$ so that $q(t) = \tfrac{1}{2}$, with a 50/50 chance of
accepting a prize.\footnote{If there is no such~$t$
  because of point masses in the $D_i$'s, then a minor extension of
  the argument yields the same result (Exercise~\ref{exer:prophetmass}).}
\end{proof}

The drawback of this threshold rule relative to the optimal online
algorithm is clear: it does not guarantee as much expected value.  
Nonetheless, this solution possesses several attractive properties
that are not satisfied by the optimal algorithm:
\begin{enumerate}

\item The threshold rule recommended by Theorem~\ref{t:pi} depends on
  the prize value distributions~$D_1,D_2,\ldots,D_n$ only inasmuch as
  it depends on the number~$t$ for which there is a 50/50 probability
  that at least one realized value exceeds~$t$.  For example,
  reordering the distributions arbitrarily does not change the
  recommended threshold rule.

\item A threshold rule is simple in that it is defined by only one
  parameter.  Intuitively, a single-parameter rule is less prone to
  ``overfitting'' to the assumed distributions than a more highly
  parameterized algorithm like the ($n$-parameter) optimal
  algorithm.\footnote{See Chapter~29 on data-driven 
    algorithm design for a formalization of this intuition.}

\item Theorem~\ref{t:pi} gives flexible qualitative advice about how
  to approach such problems: Start with threshold rules, and don't be too
  risk-averse (i.e., choose an ambitious enough threshold that
  receiving no prize is a distinct possibility).

\end{enumerate}

\section{Average-Case Justifications of Classical Algorithms}\label{s:classical}

Distributional assumptions can guide the design of algorithms, as with
the optimal stopping problem introduced in Section~\ref{ss:stop}.
Distributional analysis can also be used to analyze a general-purpose
algorithm, with the goal of explaining mathematically why its
empirical performance is much better than its worst-case performance.
In these applications, the assumed probability distribution over
inputs should not be taken literally; rather, it serves as a metaphor
for ``real-world'' or ``non-pathological'' inputs.  This section
gives the flavor of work along these lines by describing one result
for each of three classical problems:
sorting, hashing, and bin packing.

\subsection{QuickSort}

Recall the QuickSort algorithm from undergraduate algorithms which,
given an array of $n$ elements from a totally ordered set, works as
follows:
\begin{itemize}

\item Designate one the~$n$ array entries as a ``pivot'' element.

\item Partition the input array around the pivot element~$p$, meaning
  rearrange the array entries so that all entries less than~$p$ appear
  before~$p$ in the array and all entries greater than~$p$ appear
  after~$p$ in the array.

\item Recursively sort the subarray comprising the elements less than~$p$.

\item Recursively sort the subarray comprising the elements greater than~$p$.

\end{itemize}
The second step of the algorithm is easy to implement in~$\Theta(n)$ time.
There are many ways to choose the pivot element, and the running time
of the algorithm varies between~$\Theta(n \log n)$ and~$\Theta(n^2)$,
depending on these choices.\footnote{In the best-case scenario,
every pivot element is the median element of the subarray, leading to
the recurrence
$T(n) = 2 T(\tfrac{n}{2}) + \Theta(n)$ with solution~$\Theta(n \log n)$.
In the worst-case scenario,
every pivot element is the minimum or maximum element of the subarray,
leading to the recurrence
$T(n) = T(n-1) + \Theta(n)$ with solution~$\Theta(n^2)$.}
One way to enforce the best-case scenario is to explicitly compute the
median element and use it as the pivot.  A simpler and more practical
solution is to choose the pivot element uniformly at random; most of
the time, it will be close enough to the median that both recursive
calls are on significantly smaller inputs.  A still simpler solution,
which is common in practice, is to always use the first array element
as the pivot element.  This deterministic version of QuickSort
runs in~$\Theta(n^2)$ time on already-sorted arrays, but empirically
its running time is~$\Theta(n \log n)$ on almost all other inputs.
One way to formalize this observation is to analyze the algorithm's
expected running time on a random input.  As a comparison-based
sorting algorithm, the running time of QuickSort depends only on the
relative order of the array entries, so we can assume without loss of
generality that the input is a permutation of~$\{1,2,\ldots,n\}$ and
identify a ``random input'' with a random permutation.  
With any of the standard implementations of the partitioning
subroutine, the average-case running time of this deterministic
QuickSort algorithm is at most a constant factor larger than its
best-case running time.

\begin{theorem}[\citet{hoare}]\label{t:qs}
The expected running time of the deterministic QuickSort algorithm on
a random permutation of~$\{1,2,\ldots,n\}$ is~$O(n \log n)$.
\end{theorem}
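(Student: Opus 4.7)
The plan is to bound the expected number of element comparisons performed by the algorithm, since with a standard partition subroutine the total running time is $O(n)$ plus a constant times the number of comparisons. I would introduce, for each pair of values $1 \le i < j \le n$, an indicator random variable $X_{ij}$ equal to $1$ if the algorithm ever directly compares $i$ and $j$, and $0$ otherwise. Linearity of expectation then reduces the task to estimating $\sum_{i<j} \prob{X_{ij}=1}$, so the whole problem becomes a matter of pinning down one probability per pair.

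The key structural observation I would establish is that $i$ and $j$ are compared during the execution if and only if one of $i,j$ is the first element of the set $S_{ij} := \{i, i+1, \ldots, j\}$ to be chosen as a pivot. The reason is that throughout the recursion, the elements of $S_{ij}$ stay together in a common subarray until some element of $S_{ij}$ is selected as pivot; if that pivot is strictly between $i$ and $j$, then $i$ and $j$ are split into different subarrays and can never be compared thereafter, while if the pivot is $i$ or $j$ itself, a comparison with the other occurs during that partition step.

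Next I would use the fact that the deterministic QuickSort always picks the first entry of the current subarray as its pivot, combined with the randomness of the input permutation. Because the input is a uniformly random permutation of $\{1,\ldots,n\}$, the relative order of the $j-i+1$ elements of $S_{ij}$ is uniform, so each element of $S_{ij}$ is equally likely to be the first among them to appear (and thus the first to become a pivot in its recursive call). Hence $\prob{X_{ij}=1} = \tfrac{2}{j-i+1}$.

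Finally, I would assemble the bound: the expected number of comparisons is $\sum_{i<j} \tfrac{2}{j-i+1} = \sum_{k=2}^{n} \tfrac{2(n-k+1)}{k} \le 2n H_n = O(n\log n)$, and adding the $O(n)$ partitioning overhead per level (already absorbed in the per-comparison accounting) yields the claimed $O(n\log n)$ expected running time. The main obstacle, and the only genuinely non-routine step, is justifying the ``first pivot in $S_{ij}$'' characterization and arguing that, for the deterministic first-entry pivot rule, the randomness of the input permutation indeed induces a uniform distribution over which element of $S_{ij}$ is picked first; everything else is a clean calculation.
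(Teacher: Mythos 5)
Your proposal is correct and follows essentially the same route as the paper: the pairwise indicator variables $X_{ij}$, the observation that $i$ and $j$ are compared exactly when one of them is the first element of $\{i,\ldots,j\}$ to become a pivot, the resulting probability $\tfrac{2}{j-i+1}$ from uniformity of the permutation's restriction to that interval, and the harmonic-sum calculation. The only difference is cosmetic: you spell out the reindexing $k=j-i+1$ explicitly, whereas the paper states the $O(n\log n)$ bound directly.
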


\begin{proof}
We sketch one of the standard proofs.  Assume that the partitioning
subroutine only makes comparisons that involve the pivot element; this is the
case for all of the textbook implementations.  Each recursive call is
given a subarray consisting of the elements from some
interval~$\{i,i+1,\ldots,j\}$; conditioned on this interval, the
relative order of its elements in the subarray is uniformly random.

Fix elements $i$ and $j$ with $i < j$.  These elements are passed to
the same sequence of recursive calls (along
with~$i+1,i+2,\ldots,j-1$), up to the first call in which an element
from~$\{i,i+1,\ldots,j\}$ is chosen as a pivot element.  At this
point, $i$ and $j$ are either compared to each other (if $i$ or $j$
was the chosen pivot) or not (otherwise); in any case, they are never
compared to each other again in the future.  With all subarray
orderings equally likely, the probability that $i$ and $j$ are
compared is exactly~$\tfrac{2}{j-i+1}$.  By the linearity of
expectation, the expected total number of comparisons is then
$\sum_{i=1}^{n-1} \sum_{j=i+1}^n \tfrac{2}{j-i+1} = O(n \log n)$, and
the expected running time of the algorithm is at most a constant
factor larger.
\end{proof}

\subsection{Linear Probing}\label{ss:knuth}

A hash table is a data structure that supports fast insertions and
lookups.  Under the hood, most hash table implementations maintain an
array~$A$ of some length~$n$ and use a hash function~$h$ to map each
inserted object~$x$ to an array entry~$h(x) \in \{1,2,\ldots,n\}$.  A
fundamental issue in hash table design is how to resolve collisions,
meaning pairs~$x,y$ of distinct inserted objects for which
$h(x)=h(y)$.  {\em Linear probing} is a specific way of resolving
collisions:
\begin{enumerate}

\item Initially, all entries of~$A$ are empty.

\item Store a newly inserted object~$x$ in the first empty entry in
  the sequence $A[h(x)]$, $A[h(x)+1], A[h(x)+2], \ldots$, wrapping
  around to the beginning of the array, if necessary.

\item To search for an object~$x$, scan the entries
  $A[h(x)], A[h(x)+1], A[h(x)+2], \ldots$ until encountering $x$ (a
  successful search) or an empty slot (an unsuccessful search),
  wrapping  around to the beginning of the array, if necessary.

\end{enumerate}
That is, the hash function indicates the starting position for an
insertion or lookup operation, and the operation scans to the right
until it finds the desired object or an empty position.
The running time of an insertion or lookup operation
is proportional to the length of this scan.

The bigger the fraction~$\alpha$ of the hash table that is occupied
(called its {\em load}),
the fewer empty array entries and the longer the
scans.  To get calibrated, imagine searching for an empty
array entry using independent and uniformly random probes.  The number
of attempts until a success is then a geometric random variable with
success probability $1-\alpha$, which has expected value
$\tfrac{1}{1-\alpha}$.  With linear probing, however, objects tend to
clump together in consecutive slots, resulting in slower operation
times.  How much slower?

Non-trivial mathematical guarantees for hash tables are possible only
under assumptions that rule out data sets that are pathologically
correlated with the table's hash function; for this reason, hash
tables have long constituted one of the killer applications of
average-case analysis.  Common assumptions include asserting some
amount of randomness in the data (as in average-case analysis), in the
choice of hash function (as in randomized algorithms), or both (as in
Chapter~26).  For example, assuming that the data and hash function
are such that every hash value~$h(x)$ is an independent and uniform
draw from~$\{1,2,\ldots,n\}$, the expected time of insertions and
lookups scales with $\tfrac{1}{(1-\alpha)^2}$.\footnote{This result
  played an important role in the genesis of the mathematical analysis
  of algorithms.  Donald E.\ Knuth, its discoverer, wrote: ``I first
  formulated the following derivation in 1962\ldots Ever since that
  day, the analysis of algorithms has in fact been one of the major
  themes in my life.''}

\subsection{Bin Packing}\label{ss:bp}

The {\em bin packing} problem played a central role in the early
development of the average-case analysis of algorithms; this section
presents one representative result.\footnote{See Chapter~11 for
  an analysis of bin packing heuristics in random-order models.}
Here, the average-case analysis is
of the solution quality output by a heuristic (as with the prophet
inequality), not its running time (unlike our QuickSort and linear
probing examples).

In the bin packing problem, the input is~$n$ items with
sizes~$s_1,s_2,\ldots,s_n \in [0,1]$.  Feasible solutions correspond
to ways of partitioning the items into bins so that the sum of the
sizes in each bin is at most~1.  The objective is to minimize the number of
bins used.  This problem is $NP$-hard, so every polynomial-time
algorithm produces suboptimal solutions in some cases (assuming
$P \neq NP$).

Many practical bin packing heuristics have been studied extensively
from both worst-case and average-case viewpoints.  One
example is the {\em first-fit decreasing (FFD)} algorithm:
\begin{itemize}

\item Sort and reindex the items so that $s_1 \ge s_2 \ge \cdots s_n$.

\item For~$i=1,2,\ldots,n$:

\begin{itemize}

\item If there is an existing bin with room for item~$i$ (i.e., with
  current total size at most~$1-s_i$), add~$i$ to the first such bin.

\item Otherwise, start a new bin and add~$i$ to it.

\end{itemize}

\end{itemize}
For example, consider an input consisting of~6 items with
size~$\tfrac{1}{2}+\eps$, 6 items with size $\tfrac{1}{4}+2\eps$, 6
jobs with size $\tfrac{1}{4}+\eps$, and 12 items with
size~$\tfrac{1}{4}-2\eps$.  The FFD algorithm uses~11 bins while an
optimal solution packs them perfectly into~9 bins
(Exercise~\ref{exer:ffd}).  Duplicating this set of~30 jobs as many
times as necessary shows that there are arbitrarily large inputs for
which the FFD algorithm uses~$\tfrac{11}{9}$ times as many bins as an
optimal solution.  Conversely, the FFD algorithm never uses more than
$\tfrac{11}{9}$ times the minimum-possible number of bins plus an
additive constant (see the Notes for details).

The factor of~$\tfrac{11}{9} \approx 1.22$ is quite good as worst-case
approximation ratios go, but empirically the FFD algorithm usually
produces a solution that is extremely close to optimal.
One approach to better theoretical bounds is distributional analysis.
For bin-packing algorithms, the natural starting point
is 
the case in which item sizes are independent draws from the uniform
distribution on~$[0,1]$.  Under this (strong) assumption, 
the FFD algorithm is near-optimal in a strong sense.

\begin{theorem}[\cite{F80}]\label{t:f80}
  For every $\eps > 0$, for~$n$ items with sizes distributed
  independently and uniformly in~$[0,1]$, with probability~$1-o(1)$ as
  $n \rightarrow \infty$, the FFD algorithm uses less than~$(1+\eps)$
  times as many bins as an optimal solution.
\end{theorem}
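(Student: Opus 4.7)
The plan is to lower bound OPT by the total item volume and upper bound FFD by $n/2 + o(n)$ using a near-perfect pairing of items above and below $1/2$ that exploits the symmetry of the uniform distribution about $1/2$. For the lower bound, since each bin has capacity $1$ we have $\mathrm{OPT} \ge \sum_{i=1}^n s_i$, and the $s_i$ are i.i.d.\ with mean $1/2$ and variance $1/12$, so Chebyshev's inequality gives $\sum_i s_i \ge n/2 - n^{2/3}$ with probability $1-o(1)$.

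For the upper bound, I would split the items into \emph{large} ($s_i > 1/2$) and \emph{small} ($s_i \le 1/2$); both classes have size $n/2 \pm O(\sqrt{n \log n})$ w.h.p.\ by a Chernoff bound. Since no two large items fit in one bin, FFD (processing in decreasing order of size) places each large item in a fresh bin, opening exactly as many bins as there are large items. The bin containing a large item of size $1/2 + x$ has residual capacity $1/2 - x \in [0, 1/2)$, and FFD then tries to fit the small items into these residuals in decreasing order.

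To bound the extra bins opened in the small-item phase, I would partition items into $m := n^{1/3}$ thin slabs around $1/2$: for $k = 1, \ldots, m$ let $L_k$ count large items in $I_k := (1/2 + (k-1)/(2m), 1/2 + k/(2m)]$ and $S_k$ count small items in the mirror slab $J_k := [1/2 - k/(2m), 1/2 - (k-1)/(2m))$. Uniformity gives $\mathbb{E}[L_k] = \mathbb{E}[S_k] = n/(2m)$; a Chernoff plus union bound over $k$ yields $\sum_k |L_k - S_k| = O(m \sqrt{(n/m) \log n}) = O(n^{2/3}\sqrt{\log n}) = o(n)$ w.h.p. A large item in $I_k$ paired with any small item in $J_k$ fills the corresponding bin to within $1/m$ of capacity, so this slab-wise matching leaves only $o(n)$ items unpaired, meaning a clairvoyant packer respecting the FFD skeleton achieves $|L| + o(n) = n/2 + o(n)$ bins.

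The main obstacle is arguing that FFD itself, being slab-oblivious, comes within $o(n)$ bins of this ideal matching. The danger is a cascade: a small item in $J_k$ that FFD drops into the residual of an outer-slab bin in $I_{k'}$ with $k' > k$ wastes capacity that a smaller item from $J_{k'}$ would have used, potentially forcing the latter to open its own bin. My plan here is an inductive argument that exploits the decreasing-size processing order: all $J_k$ items are considered before any $J_{k+1}$ item, so after the $k$-th slab of small items has been placed, the residual profile of the bins differs from the ``slab-$k$ matched'' profile by at most $|L_k - S_k| + O(1)$; summing over $k$ charges the total waste to $\sum_k |L_k - S_k|$ plus $O(m)$ boundary losses at slab edges, both $o(n)$. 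Combining this $n/2 + o(n)$ upper bound with the $n/2 - o(n)$ lower bound on OPT yields $\mathrm{FFD}/\mathrm{OPT} \le 1 + \eps$ w.h.p.\ for every fixed $\eps > 0$.
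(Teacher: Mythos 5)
Your overall strategy --- lower bound $\mathrm{OPT}$ by total volume, upper bound $\mathrm{FFD}$ by $n/2 + o(n)$ via a near-perfect matching of items symmetric about $1/2$ --- is in the right spirit, and the $\mathrm{OPT}$ lower bound is fine. But the argument departs from the paper in a way that leaves a genuine gap. The paper avoids reasoning about FFD's placement decisions entirely: it introduces an auxiliary \emph{truncate-and-match} (TM) algorithm (isolate the roughly $2n^{3/4}$ items of size $\ge 1 - 2/n^{1/4}$ in singleton bins, then sort the rest and pair the $i$-th largest with the $i$-th smallest), shows TM uses $n/2 + O(n^{3/4})$ bins w.h.p.\ (Exercise~\ref{exer:f80a}), and then proves a purely deterministic dominance lemma (Lemma~\ref{l:f80}) that FFD never uses more bins than TM on \emph{any} input. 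This decoupling is the key idea: all the probabilistic work is done on the simple, explicit TM packing, and FFD's cascading placements never need to be tracked.

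You instead attempt a direct analysis of FFD, and the hard part --- the ``cascade'' you correctly flag --- is precisely what you have not actually proved. The claim that ``after the $k$-th slab of small items has been placed, the residual profile differs from the slab-$k$ matched profile by at most $|L_k - S_k| + O(1)$'' is asserted, not established; notice that once a small item lands in a bin the bin's residual leaves its original slab entirely, later small items can re-enter that bin, and residuals are no longer monotone in bin index, so there is no obvious invariant to carry through the induction. There is also a concrete error in the matching step: a large item in $I_k = (1/2 + (k-1)/(2m),\, 1/2 + k/(2m)]$ and a small item in the mirror slab $J_k = [1/2 - k/(2m),\, 1/2 - (k-1)/(2m))$ have sum lying in $(1 - 1/(2m),\, 1 + 1/(2m))$, so the pair can \emph{overflow} the bin; the statement that ``any'' such pair fits to within $1/m$ of capacity is false, and roughly half of arbitrary within-slab pairings fail. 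You would need to offset the slabs (pair $I_k$ with $J_{k+1}$) or sort within slabs as TM does globally. In short, what you have is a reasonable heuristic sketch of why a TM-like packing uses $n/2 + o(n)$ bins, but the step from that packing to FFD itself --- which the paper handles cleanly via the FFD-dominates-TM lemma --- is missing, and that is the core of the proof.
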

In other words, the typical approximation ratio of the FFD algorithm
tends to~1 as the input size grows large.

We outline a two-step proof of Theorem~\ref{t:f80}.  
The first step shows that the guarantee holds for a less natural
algorithm that we call the {\em truncate and
  match (TM)} algorithm.  The second step shows that the FFD algorithm
never uses more bins than the TM algorithm.

The truncate and match algorithm works as follows:
\begin{itemize}

\item Pack every item with size at least $1-\tfrac{2}{n^{1/4}}$ in its
  own bin.\footnote{For clarity, we omit ceilings and floors.  See
    Exercise~\ref{exer:f80a} for the motivation behind this size cutoff.}

\item Sort and reindex the remaining~$k$ items so that $s_1 \ge s_2
  \ge \cdots \ge s_k$.  (Assume for simplicity that~$k$ is even.)

\item For each $i=1,\ldots,k/2$, put items~$i$ and $k-i+1$ into a
  common bin if possible; otherwise, put them in separate bins.

\end{itemize}

To explain the intuition behind the TM algorithm, consider the
expected order
statistics (i.e., expected minimum, expected second-minimum, etc.)
of~$n$ independent samples from the uniform distribution on~$[0,1]$.
It can be shown that these split~$[0,1]$ evenly into~$n+1$
subintervals; the expected minimum is~$\tfrac{1}{n+1}$, the expected
second-minimum~$\tfrac{2}{n+1}$, and so on.
Thus at least in an expected sense, the first and last items together
should fill up a bin exactly, as should the second and second-to-last
items, and so on.  Moreover, as $n$ grows large, the difference
between the realized order statistics and their expectations should
become small.  Setting aside a small number of the largest items in
the first step then corrects for any (small) deviations from these
expectations with negligible additional cost.  See
Exercise~\ref{exer:f80a} for details.

We leave the second step of the proof of Theorem~\ref{t:f80} as
Exercise~\ref{exer:f80b}.

\begin{lemma}\label{l:f80}
For every bin packing input, the FFD algorithm uses at most as many
bins as the TM algorithm.
\end{lemma}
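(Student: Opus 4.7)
The plan is to dispose of the big items directly and then prove the claim on the remaining non-big items by a stepwise comparison of FFD and TM. For $n$ large enough that $1 - 2/n^{1/4} > \tfrac{1}{2}$, every big item must occupy its own bin in any valid packing, so FFD (sorting in decreasing order) matches TM's handling of big items exactly; the finitely many small-$n$ cases can be checked directly. We may therefore restrict attention to the $k$ non-big items and reformulate TM equivalently as the following sequential algorithm processing items in decreasing order: for $i \le k/2$, open a new bin and place item $i$; for $j > k/2$, place item $j$ in the bin holding item $k-j+1$ if $s_{k-j+1}+s_j \le 1$, and otherwise open a new bin.

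With both algorithms now processing the same stream, let $F_j$ and $T_j$ denote the bin counts after $j$ items and attempt to maintain the invariant $F_j \le T_j$. At each step each algorithm either opens a new bin or places into an existing one, giving four combined cases. Three of them preserve the invariant; the only threat is the case in which FFD opens a new bin while TM places. Whenever this occurs we must have $j > k/2$, the pair $(k-j+1, j)$ fits (so $s_{k-j+1} + s_j \le 1$), and every current FFD bin has content exceeding $1-s_j \ge s_{k-j+1}$. In particular, the FFD bin containing item $k-j+1$ already holds at least one item added after $k-j+1$, so FFD has compacted items into that bin more aggressively than TM has.

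The crux, and main obstacle, is to charge each such threatening step to a distinct earlier ``savings'' step at which FFD placed while TM opened. For each threat at step $j$, associate the set $E_j$ of items in the FFD bin of $k-j+1$ that were inserted after $k-j+1$ itself; distinct threats $j \ne j'$ correspond to distinct FFD bins and hence to disjoint sets $E_j, E_{j'}$. A careful choice of representative from each $E_j$ (for instance an element with index at most $k/2$, whose insertion step necessarily coincides with a TM open since TM opens a new bin at every step $i \le k/2$) then exhibits an injective charge from threatening steps to savings steps. Summing yields $T_k - F_k \ge 0$, which combined with the big-item reduction proves $F \le T$ for the full input.
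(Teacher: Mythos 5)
Your stepwise framework---only the ``FFD opens while TM places'' step is dangerous, and each such threat must be charged to a distinct earlier ``FFD places while TM opens'' savings step---is a sound plan, but the charging rule you sketch does not go through. The inference that the FFD bin of item $k-j+1$ must ``already hold at least one item added after $k-j+1$'' is not valid: the bin's contents exceed $s_{k-j+1}$, but the extra items may all have been placed \emph{before} $k-j+1$, namely when $k-j+1$ was itself placed by first-fit into a bin opened by some earlier, larger item. In that case $E_j = \emptyset$ and the charge has nowhere to land. Concretely, take $k=8$ with sizes $(0.7, 0.7, 0.4, 0.4, 0.35, 0.3, 0.3, 0.3)$: FFD produces the bins $\{1,6\}$, $\{2,7\}$, $\{3,4\}$, $\{5,8\}$, and at step $j=5$ FFD opens its fourth bin while TM places item $5$ with item $4$ (since $s_4 + s_5 = 0.75 \le 1$), so step $5$ is a threat; yet the FFD bin of item $4$ is $\{3,4\}$ with item $3$ inserted \emph{before} item $4$, so $E_5 = \emptyset$. (The lemma of course still holds here---step $4$, where FFD places item $4$ into item $3$'s bin while TM opens, is the balancing savings step---but your scheme never discovers it.) Two further problems compound this: the sets $E_j$ need not be disjoint when several partners $k-j+1, k-j'+1$ land in the same FFD bin (the $E_j$ become nested, not disjoint), and even when $E_j \ne \emptyset$ it may consist entirely of items with index greater than $k/2$, so the ``representative with index at most $k/2$'' you propose need not exist.

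The opening reduction is also not airtight as stated. On the full input, FFD can and typically does place small items into bins that already hold a big item (big items have size strictly less than $1$), so FFD's trajectory on the remaining items is \emph{not} the same as FFD run on the small items alone. The cleaner move is to run the step-by-step comparison on the full sorted stream, starting both algorithms from the common state of $b$ singleton big-item bins, rather than discarding the big items entirely. With that fix, your four-case skeleton and the identity (number of threats) $-$ (number of savings) $= F_k - T_k$ are correct; what is missing is a charging rule that handles the case where $k-j+1$ is not the opener of its FFD bin.
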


The description of the general-purpose FFD algorithm is not tailored
to a distributional assumption, but the proof of Theorem~\ref{t:f80}
is fairly specific to uniform-type distributions.  This is emblematic
of one of the drawbacks of average-case analysis: Often, it is
analytically tractable only under quite specific distributional
assumptions.

\section{Good-on-Average Algorithms for Euclidean Problems}\label{s:euclid}

Another classical application domain for average-case analysis is in
computational geometry, with the input comprising random points from
some subset of Euclidean space.  We highlight two representative
results for fundamental problems in two dimensions, one concerning the
running time of an always-correct convex hull algorithm and one about
the solution quality of an efficient heuristic for the $NP$-hard Traveling
Salesman Problem.  

\subsection{2D Convex Hull}\label{ss:ch}

A typical textbook on computational geometry begins with the
{\em 2D convex hull} problem.  The input consists of a set~$S$ of~$n$
points in the plane (in the unit square $[0,1] \times [0,1]$, say) and
the goal is to report, in sorted order, the points of~$S$ that lie on
the convex
  hull of~$S$.\footnote{Recall that the {\em convex hull} of a set of
    points is the smallest convex set containing them, or equivalently
    the set of all convex combinations of points from~$S$.  In
    two dimensions, imagine the points as nails in a board, and
    the convex hull as a taut rubber band that encloses them.}
There are several algorithms that solve the 2D convex hull problem
in~$\Theta(n \log n)$ time.  Can we do better---perhaps even linear
time---when the points are drawn from a distribution, such as the
uniform distribution on the square?

\begin{theorem}[\citet{BS78}]\label{t:ch}
There is an algorithm that solves the 2D convex hull problem in
expected~$O(n)$ time for $n$ points drawn independently and uniformly
from the unit square.
\end{theorem}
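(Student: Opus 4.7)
The plan is to decouple the problem into two phases: sort the points by $x$-coordinate, then extract the convex hull from the sorted list. For worst-case inputs, sorting would cost $\Theta(n \log n)$, but under the uniform distribution on $[0,1]^2$ the $x$-coordinates are i.i.d.\ $\mathrm{U}[0,1]$, so they admit an expected $O(n)$-time bucket sort; once the points are sorted, Andrew's monotone chain computes the hull in deterministic $O(n)$ time. The overall expected running time is then $O(n)$.

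First I would bucket sort by $x$-coordinate: create $n$ buckets corresponding to $[0, 1/n), [1/n, 2/n), \ldots, [(n{-}1)/n, 1]$, deposit each point into bucket $\lfloor n x_i \rfloor$ in $O(1)$ time, insertion sort each bucket, and concatenate. If $k_i$ is the number of points landing in bucket $i$, then $k_i \sim \mathrm{Bin}(n, 1/n)$, and hence $\expect{k_i^2} = \mathrm{Var}(k_i) + \expect{k_i}^2 = (1-1/n) + 1 < 2$. Since insertion sort on $k_i$ elements costs $O(k_i^2)$, linearity of expectation gives a total expected sorting cost of $\sum_{i=1}^n O(\expect{k_i^2}) = O(n)$, while the bucketing step itself is clearly $O(n)$.

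Given the points in sorted $x$-order, I would then run Andrew's monotone chain: one left-to-right scan to build the upper hull and one right-to-left scan to build the lower hull, each maintaining a stack of current hull candidates and popping any vertex whose removal is forced by a ``wrong-turn'' test against the next point in the scan. Each point is pushed onto and popped from each stack at most once, so both scans take $O(n)$ worst-case time, and concatenating the two chains yields the hull vertices in counterclockwise (hence sorted) order, as required by the problem statement.

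The only real analytical step is the second-moment bound in the bucket sort: without it the worst case is $\Theta(n^2)$ (all points could land in a single bucket), and we would save nothing over comparison-based sorting. Beyond that, the two-dimensional nature of the input is essentially a red herring—the $y$-coordinates play no role in the sort, so the $x$-coordinates alone form $n$ i.i.d.\ $\mathrm{U}[0,1]$ samples and the one-dimensional bucket-sort variance calculation applies verbatim; the $y$-coordinates enter only through the deterministic left-turn tests of the monotone chain.
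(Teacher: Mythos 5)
Your proof is correct, but it takes a genuinely different route from the paper. The paper's algorithm (following \citet{BS78}) is a divide-and-conquer: split the $n$ points arbitrarily into two halves, recursively compute the hull of each half, and merge the two hulls in time linear in their combined size. The entire weight of the average-case analysis then rests on Lemma~\ref{l:ch} (due to \citet{RS63}), which says the expected hull size of $n$ uniform points in the square is $O(\log n)$; this makes the merge step cheap and turns the recurrence into $T(n) \le 2T(n/2) + O(\log n) = O(n)$. Your approach instead observes that 2D convex hull reduces to sorting by $x$-coordinate followed by a deterministic linear-time extraction (Andrew's monotone chain, which is precisely what the paper's Exercise~\ref{exer:graham} asks for), so the only place randomness is needed is to make the sort run in expected $O(n)$; you supply that via the standard second-moment bound for bucket sort on i.i.d.\ $\mathrm{U}[0,1]$ keys. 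The trade-offs: your proof is more elementary, sidestepping the geometric fact about expected hull sizes entirely and needing only the marginal distribution of the $x$-coordinates, so it extends immediately to any input distribution whose $x$-marginal admits linear expected-time sorting. The paper's proof is more intrinsically geometric, degrades gracefully to worst-case $O(n\log n)$ if the distributional assumption fails (as noted in Section~\ref{ss:disc}), and establishes the $O(\log n)$ expected hull size as a reusable fact. Both are legitimate proofs of the theorem; yours is not the one the paper gives.
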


The algorithm is a simple divide-and-conquer algorithm.
Given points $S=\{p_1,p_2,\ldots,p_n\}$ drawn independently and
uniformly from the plane:
\begin{itemize}

\item If the input~$S$ contains at most~5 points, compute the convex
  hull by brute force.  Return the points of~$S$ on the convex hull,
  sorted by~$x$-coordinate.

\item Otherwise, let~$S_1 = \{p_1,\ldots,p_{n/2}\}$
  and~$S_2=\{p_{(n/2)+1},\ldots,p_n\}$ denote the first and second
  halves of~$S$.  (Assume for simplicity that~$n$ is even.)

\item Recursively compute the convex hull~$C_1$ of~$S_1$, with its
  points sorted by $x$-coordinate.

\item Recursively compute the convex hull~$C_2$ of~$S_2$, with its
  points sorted by $x$-coordinate.

\item Merge~$C_1$ and~$C_2$ into the convex hull~$C$ of~$S$.
  Return~$C$, with the points of~$C$ sorted by $x$-coordinate.
  
\end{itemize}
For every set~$S$ and partition of~$S$ into~$S_1$ and~$S_2$, every
point on the convex hull of~$S$ is on the convex hull of either~$S_1$
or~$S_2$.  Correctness of the algorithm follows immediately.  The last
step is easy to implement in time linear in $|C_1|+|C_2|$; see
Exercise~\ref{exer:graham}.  Because the subproblems~$S_1$ and~$S_2$
are themselves uniformly random points from the unit square (with the
sorting occurring only after the recursive computation completes), the
expected running time of the algorithm is governed by the recurrence
\[
T(n) \le 2 \cdot T(\tfrac{n}{2}) + O(\expect{|C_1|+|C_2|}).
\]
Theorem~\ref{t:ch} follows immediately from this recurrence and the
following combinatorial bound.
\begin{lemma}[\citet{RS63}]\label{l:ch}
The expected size of the convex hull of~$n$ points drawn independently
and uniformly from the unit square is~$O(\log n)$.
\end{lemma}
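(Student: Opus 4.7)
The plan is to reduce bounding the expected convex hull size to bounding the expected size of four \emph{Pareto-maximal} subsets, one for each corner of the unit square. The key observation is that every vertex $p$ of the convex hull of $S$ admits a nonzero outward supporting normal $(a,b) \in \RR^2$ such that $a p_1 + b p_2 \ge a q_1 + b q_2$ for every $q \in S$. This vector lies in one of the four closed quadrants; in the quadrant with $a, b \ge 0$, any $q$ that strictly dominated $p$ (meaning $q_1 > p_1$ and $q_2 > p_2$) would satisfy $a q_1 + b q_2 > a p_1 + b p_2$, contradicting the supporting inequality. Hence $p$ lies in the upper-right staircase
\[
M^{++}(S) = \{ p \in S : \text{no other } q \in S \text{ satisfies } q_1 > p_1 \text{ and } q_2 > p_2 \},
\]
and analogously every hull vertex belongs to $M^{++} \cup M^{+-} \cup M^{-+} \cup M^{--}$, with the other three staircases defined by reflection.

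The next step is to show that $\expect{|M^{++}(S)|} = H_n = O(\log n)$ by a short exchangeability argument. List the samples in increasing order of $x$-coordinate as $p_{(1)}, \ldots, p_{(n)}$; because the two coordinates are independent within and across samples, the $y$-values $y_{(1)}, \ldots, y_{(n)}$ in this order are themselves iid uniform on $[0,1]$ and independent of the $x$-ordering (with probability one all coordinates are distinct, so the ordering is well-defined). The point $p_{(k)}$ lies in $M^{++}(S)$ exactly when $y_{(k)}$ is the maximum of $\{y_{(k)}, y_{(k+1)}, \ldots, y_{(n)}\}$, an event of probability $1/(n-k+1)$ by symmetry. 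Linearity of expectation then yields $\expect{|M^{++}(S)|} = \sum_{k=1}^n \tfrac{1}{n-k+1} = H_n$.

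By the symmetry of the uniform distribution on $[0,1]^2$ under reflection through either coordinate midline, the identical bound applies to $M^{+-}, M^{-+}$, and $M^{--}$. Adding the four expectations and invoking the containment from the first step, the expected number of convex hull vertices is at most $4 H_n = O(\log n)$, as claimed. The argument is almost entirely routine; the main thing to get right is the four-corner decomposition of hull vertices via supporting hyperplanes, together with the (almost sure) general position of the sample, which ensures that ``vertex'' and ``extreme point'' coincide and that the quadrant assignment is unambiguous.
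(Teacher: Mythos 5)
Your proof is correct, and it takes a genuinely different route from the one in the chapter. The chapter's argument is a geometric two-phase scheme: draw the sample in two halves $S_1, S_2$ of $n/2$ points each, show (Exercise~\ref{exer:ch}) that the convex hull of $S_1$ already covers a $1 - O(\tfrac{\log n}{n})$ fraction of the unit square, and conclude that in expectation only $O(\log n)$ points of $S_2$ fall outside it and hence can appear on the hull of $S_1 \cup S_2$; symmetry handles $S_1$. Your argument instead is combinatorial/order-theoretic: every hull vertex has an outward supporting normal lying in one of the four closed quadrants, so it belongs to one of four Pareto staircases $M^{\pm\pm}$, and the expected size of each staircase is exactly the harmonic number $H_n$ by the classical records argument (sort by $x$, observe the conditioned $y$-values are an iid sequence, and $p_{(k)}$ is a $++$-maximum iff $y_{(k)}$ is a suffix record, probability $1/(n-k+1)$). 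Both are standard and both yield $O(\log n)$; yours has the advantages of giving an explicit constant ($\le 4H_n$) and avoiding the geometric coverage estimate of Exercise~\ref{exer:ch}, while the chapter's approach is the one that transfers most directly to other ``nice'' input distributions (a point the text cares about, see Section~\ref{ss:disc}). Your remarks about general position are harmless but not really needed: the supporting-normal step requires only that each hull vertex is an extreme point, which holds for any finite point set, and the records computation only needs the $x$-coordinates to be almost surely distinct, which they are.
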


\begin{proof}
Imagine drawing the input points in two phases, with $\tfrac{n}{2}$
points~$S_i$ drawn in phase~$i$ for $i=1,2$.
An elementary argument shows that the convex hull of the points in~$S_1$
occupies, in expectation, at least a $1-O(\tfrac{\log n}{n})$ fraction
of the unit square (Exercise~\ref{exer:ch}).
Each point of the second phase thus lies in the
interior of the convex hull of~$S_1$ (and hence of~$S_1 \cup S_2$)
except with probability~$O(\tfrac{\log n}{n})$, so the expected number of
points from~$S_2$ on the convex hull of~$S_1 \cup S_2$ is~$O(\log
n)$.  By symmetry, the same is true of~$S_1$.
\end{proof}

\subsection{The Traveling Salesman Problem in the Plane}\label{ss:tsp}

In the {\em Traveling Salesman Problem (TSP)}, the input consists of
$n$ points and distances between them, and the goal is to compute a
tour of the points (visiting each point once and returning to the
starting point) with the minimum-possible total length.
In {\em Euclidean} TSP, the points lie in Euclidean space and all
distances are straight-line distances.  This problem is $NP$-hard, even
in two dimensions.  The main result of this section is analogous to
Theorem~\ref{t:f80} in Section~\ref{ss:bp} for the bin packing
problem---a polynomial-time algorithm that, when the input points are
drawn independently and uniformly from the unit square,
has approximation ratio tending to~1 (with high probability) as $n$
tends to infinity.

The algorithm, which we call the {\em Stitch} algorithm, works as
follows:
\begin{itemize}

\item Divide the unit square evenly into $s = \tfrac{n}{\ln n}$
  subsquares, each with side length~$\sqrt{(\ln n)/n}$.\footnote{Again, we
    ignore ceilings and floors.}

\item For each subsquare~$i=1,2,\ldots,s$, containing
  the points~$P_i$:

\begin{itemize}

\item If $|P_i| \le 6 \log_2 n$, compute the
  optimal tour~$T_i$ of~$P_i$ using dynamic
  programming.\footnote{Given~$k$ points, label
    them~$\{1,2,\ldots,k\}$.  There is one subproblem for each
    subset~$S$ of points and point~$j \in S$, whose solution is the
    minimum-length path that starts at the point~1, ends at the
    point~$j$, and visits every point of~$S$ exactly once.  Each of
    the~$O(k2^k)$ subproblems can be solved in~$O(k)$ time by trying all
    possibilities for the final hop of the optimal path.
    When~$k=O(\log n)$, this running time of~$O(k^22^k)$ is polynomial
    in~$n$.}

\item Otherwise, return an arbitrary tour~$T_i$ of~$P_i$.

\end{itemize}

\item Choose an arbitrary representative point from each non-empty
  set~$P_i$, and let~$R$ denote the set of representatives.

\item Construct a tour~$T_0$ of~$R$ by visiting points from
  left-to-right in the
  bottommost row of subsquares, right-to-left in the second-to-bottom
  row, and so on, returning to the starting point after visiting all
  the points in the topmost row.

\item Shortcut the union of the subtours $\cup_{i=0}^{s} T_i$ to a
  single tour~$T$ of all~$n$ points, and return~$T$.\footnote{The
    union of the $s+1$ subtours can be viewed as a connected Eulerian
    graph, which then admits a closed Eulerian walk (using every edge
    of the graph exactly once).  This walk can be transformed to a
    tour of the points with only smaller length by skipping repeated
    visits to a point.\label{foot:euler}}

\end{itemize}
This algorithm runs in polynomial time with probability~1 and returns
a tour of the input points.  As for the approximation guarantee:

\begin{theorem}[\cite{K77}]\label{t:k77}
  For every $\eps > 0$, for~$n$ points distributed
  independently and uniformly in the unit square, with
  probability~$1-o(1)$ as
  $n \rightarrow \infty$, the Stitch algorithm returns a tour with
  total length less than~$(1+\eps)$
  times that of an optimal tour.
\end{theorem}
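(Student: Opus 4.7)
The plan is to show that, with probability $1-o(1)$, the tour $T$ returned by Stitch has length at most $(1+o(1))\,\beta\sqrt{n}$, while every tour has length at least $(1-o(1))\,\beta\sqrt{n}$, where $\beta$ denotes the Beardwood--Halton--Hammersley (BHH) constant for planar uniform TSP. Dividing yields an approximation ratio tending to~$1$, hence below $1+\eps$ for all sufficiently large $n$.

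I would first isolate the good event on which the algorithm is predictable. Each $|P_i|$ is $\mathrm{Binomial}(n,1/s)$ with mean $\ln n$, so a Chernoff bound gives $\prob{|P_i|>6\log_2 n}\le n^{-c}$ for a sufficiently large constant $c$; a union bound over the $s = n/\ln n$ subsquares shows that, with probability $1-o(1)$, the dynamic-programming branch is executed in every subsquare, and each $T_i$ is therefore an optimal tour of length $L^*(P_i)$. Since the multigraph $T_0 \cup \bigcup_{i=1}^s T_i$ is connected and has even degree at every vertex (each non-representative has degree~$2$, each representative degree~$4$), the Eulerian-walk shortcutting step of footnote~\ref{foot:euler} produces a tour $T$ of all $n$ points with $L(T) \le L(T_0) + \sum_{i=1}^s L^*(P_i)$.

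Next I would bound each summand. A direct inspection of the snake through the $\sqrt s \times \sqrt s$ grid of subsquares of side $\ell = \sqrt{(\ln n)/n}$ gives $L(T_0) = O(\sqrt s) = O(\sqrt{n/\ln n}) = o(\sqrt n)$. For the subsquare tours, rescaling each subsquare to $[0,1]^2$ and invoking BHH yields $L^*(P_i) = \beta\,\ell\,\sqrt{|P_i|}\,(1+o(1))$; summing over $i$ and applying Cauchy--Schwarz in the form $\sum_i \sqrt{|P_i|} \le \sqrt{s\sum_i|P_i|} = \sqrt{sn}$ gives $\sum_i L^*(P_i) \le \beta\sqrt n\,(1+o(1))$. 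The matching lower bound $L^*(P) \ge \beta\sqrt n\,(1-o(1))$ is one further application of BHH to all of $P$, and combining the two bounds finishes the proof.

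The main obstacle is making the per-subsquare appeal to BHH rigorous and uniform in $i$: each subsquare has only $\Theta(\ln n)$ points, and the estimate $L^*(P_i) = \beta\ell\sqrt{|P_i|}(1+o(1))$ has to hold simultaneously for all $\Theta(n/\ln n)$ subsquares. This requires a \emph{quantitative} form of BHH with explicit concentration (for instance of Rhee--Talagrand type), not the original almost-sure statement, so that the per-subsquare tails can be union-bounded. One must also argue separately that the small-population subsquares---those with $|P_i|$ too small for the BHH asymptotics to kick in---jointly contribute only $o(\sqrt n)$ to $\sum_i L^*(P_i)$, for example by applying Few's bound $L^*(k)\le\sqrt{2k}+O(1)$ to them and showing that the expected total number of points they contain is $o(n)$. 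The choice of side length $\sqrt{(\ln n)/n}$ is the sweet spot that drives all three error terms---stitching cost, Cauchy--Schwarz slack, and small-subsquare contribution---to $o(\sqrt n)$ while keeping each subsquare's dynamic program polynomial in $n$.
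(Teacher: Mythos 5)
Your proposal is structurally sound but takes a genuinely different route from the paper. The paper never invokes the Beardwood--Halton--Hammersley theorem or its constant $\beta$. Instead it compares the Stitch tour \emph{directly} to the optimal tour $T^*$ via a combinatorial ``patching'' lemma (Lemma~\ref{l:patch}): letting $L_i$ denote the length of the portion of $T^*$ inside subsquare $i$, one can always construct a tour of $P_i$ of length at most $L_i + 6\ell$, where $\ell = \sqrt{(\ln n)/n}$ is the side length. The construction is a classical Eulerian trick --- take the pieces of $T^*$ inside the subsquare, add matchings along the perimeter to restore even degree and connectivity, shortcut. Since $\sum_i L_i \le L^*$ and there are $s = n/\ln n$ subsquares, summing gives
\[
\sum_i \bigl(\text{length of optimal tour of } P_i\bigr) \le L^* + 6 s \ell = L^* + O\!\left(\sqrt{\tfrac{n}{\ln n}}\right),
\]
which, together with the crude lower bound $L^* = \Omega(\sqrt n)$ from Lemma~\ref{l:k77a}, immediately gives a $(1+o(1))$ ratio. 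This approach buys elementariness: the only probabilistic ingredients are a Chernoff bound for $\max_i |P_i|$ and the $\Omega(\sqrt n)$ lower bound on $L^*$, and there is no need for any quantitative control of TSP asymptotics in tiny subsquares.

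Your route --- sandwiching both $L^*$ and $\sum_i L^*(P_i)$ around $\beta\sqrt{n}$ --- is conceptually coherent and would work if made rigorous, but the obstacle you flag is heavier than you acknowledge. Each subsquare contains only $\Theta(\log n)$ points, so the error term in $L^*(P_i) = \beta\ell\sqrt{|P_i|}\,(1+o(1))$ is a relative error that decays only in $|P_i|$, not in $n$; getting a uniform $(1+o(1))$ across all $s$ subsquares simultaneously requires a quantitative BHH with explicit rates plus a concentration inequality strong enough to survive a union bound over $\Theta(n/\log n)$ events. You would also need to verify that the Cauchy--Schwarz step $\sum_i \sqrt{|P_i|} \le \sqrt{sn}$ is nearly tight (it is, by concentration of the $|P_i|$'s), and to separately control subsquares with atypically few points via a Few-type bound, as you note. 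None of this is fatal, but all of it is avoided by the patching lemma, which is the key insight worth extracting: one bounds the subsquare tours against \emph{pieces of the optimal tour itself} rather than against an independent asymptotic constant, so the constant $\beta$ cancels before it ever needs to be named.
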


Proving Theorem~\ref{t:k77} requires understanding the typical
length of an optimal tour of random points in the unit square and
then bounding from above the difference between the lengths of the tour
returned by the Stitch algorithm and of the optimal tour.  The first
step is not difficult (Exercise~\ref{exer:k77}).

\begin{lemma}\label{l:k77a}
There is a constant $c_1 > 0$ such that, with probability~$1-o(1)$ as $n
\rightarrow \infty$, the length of an optimal tour of $n$ points drawn
independently and uniformly from the unit square is at least $c_1 
\sqrt{n}$.
\end{lemma}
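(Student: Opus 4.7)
The plan is to combine the deterministic lower bound $\mathrm{TSP}(P) \ge \sum_{i=1}^n d_i$, where $d_i$ denotes the distance from $p_i$ to its nearest point in $P\setminus\{p_i\}$, with a concentration argument showing that enough of the $p_i$'s are ``locally isolated'' to force $\sum_i d_i = \Omega(\sqrt{n})$.

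First I would prove the deterministic inequality by double counting: each $p_i$ is incident to exactly two edges of any Hamiltonian tour, and each such edge has length at least $d_i$, so $2\cdot\mathrm{TSP}(P) = \sum_i(\text{lengths of the two tour-edges at }p_i) \ge \sum_i 2 d_i$. This reduces the problem to a high-probability lower bound on the sum of nearest-neighbor distances.

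Next I would lower-bound $\sum_i d_i$ via a grid. Partition $[0,1]^2$ into $m\times m$ cells with $m=\lceil 2\sqrt{n}\,\rceil$, so the cell side length is $\ell = 1/m = \Theta(1/\sqrt{n})$. Call a cell \emph{isolated} if it contains exactly one sample and each of its (up to) eight neighboring cells is empty, and call the unique point in such a cell an \emph{isolated point}. Any isolated point $p_i$ has no other sample in its surrounding $3\times 3$ block, so $d_i \ge \ell$. Writing $I$ for the number of isolated points, this gives $\mathrm{TSP}(P)\ge I\cdot \ell$. A routine multinomial computation yields $\mathbb{E}[I]=\Omega(n)$: the probability that an interior cell is isolated is $\binom{n}{1}(1/m^2)(1-9/m^2)^{n-1}\to \tfrac{1}{4}e^{-9/4}$, and there are $(1-o(1))m^2 = \Theta(n)$ interior cells.

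The main obstacle is proving concentration of $I$, because the isolation events for different cells are not independent (they share the same $n$ sample points). I would handle this with McDiarmid's bounded-differences inequality applied to $I$ viewed as a function of the i.i.d.\ coordinates $p_1,\ldots,p_n$: replacing a single $p_i$ with $p_i'$ can alter the isolation status of at most the cell that used to contain $p_i$, its $8$ neighbors, the cell now containing $p_i'$, and \emph{its} $8$ neighbors---at most $18$ cells---so the bounded-difference constant is at most $18$. McDiarmid then gives $\Pr[I \le \tfrac{1}{2}\mathbb{E}[I]] \le 2\exp(-\Omega(n)) = o(1)$, and combining this with $\mathrm{TSP}(P) \ge I\cdot\ell$ yields $\mathrm{TSP}(P) \ge c_1\sqrt{n}$ with probability $1-o(1)$ for an explicit absolute constant $c_1>0$, as required.
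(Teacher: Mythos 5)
Your proof is correct and follows essentially the same dissection approach that the paper leaves to Exercise~\ref{exer:k77}: chop the unit square into $\Theta(n)$ cells of side $\Theta(n^{-1/2})$ and show that, with high probability, $\Omega(n)$ of them contain a locally isolated point, each forcing $\Omega(n^{-1/2})$ of tour length. Your deterministic reduction $\mathrm{TSP}(P) \ge \sum_i d_i$ and the McDiarmid concentration step are the standard ways to fill in the details the hint omits, so the substance matches.
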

Lemma~\ref{l:k77a} implies that proving Theorem~\ref{t:k77} reduces to
showing that (with high probability) the difference between the lengths
of Stitch's tour and the optimal tour is~$o(\sqrt{n})$.

For the second step, we start with a simple consequence of the
Chernoff bound (see Exercise~\ref{exer:k77b}).
\begin{lemma}\label{l:k77b}
In the Stitch algorithm, with probability~$1-o(1)$ as $n \rightarrow
\infty$, every subsquare contains at most~$6 \log_2 n$ points.
\end{lemma}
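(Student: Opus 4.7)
The plan is a direct application of a multiplicative Chernoff bound followed by a union bound. I would first observe that the number of input points falling in any fixed subsquare of area $\tfrac{\ln n}{n}$ is distributed as $X_i \sim \mathrm{Binomial}\!\bigl(n, \tfrac{\ln n}{n}\bigr)$, with expectation $\mu = \ln n$. Write the target threshold as $k = 6 \log_2 n = \tfrac{6}{\ln 2}\,\mu \approx 8.66\,\mu$, which is a constant factor above the mean strictly larger than $e$; this is precisely the regime in which the Chernoff tail becomes polynomially small in $n$.

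Next, I would invoke the standard upper-tail Chernoff bound in the convenient form
\[
\prob{X_i \ge k} \;\le\; \left(\frac{e\mu}{k}\right)^{\!k} \;=\; \left(\frac{e \ln 2}{6}\right)^{\!6 \log_2 n}.
\]
The base $\tfrac{e\ln 2}{6}$ is a constant strictly less than $1$, so $\prob{X_i \ge k} \le n^{-c}$ for some constant $c > 1$ (a quick numerical check gives $c$ close to $10$, well more than what is needed).

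Finally, I would take a union bound over the $s = \tfrac{n}{\ln n}$ subsquares: the probability that some subsquare contains more than $6 \log_2 n$ points is at most $s \cdot n^{-c} \le n^{1-c} = o(1)$, which is exactly the claim. There is essentially no obstacle here; the only subtle point is checking that the constant $6$ (paired with $\log_2$, not $\ln$) is large enough to produce a Chernoff exponent exceeding $1$ so that the union bound survives. This is the reason the statement uses $6 \log_2 n$ rather than, say, $2 \ln n$: the former sits far enough above the mean $\ln n$ for the tail to beat the number of cells.
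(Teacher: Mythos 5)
Your proof is correct and takes essentially the same route the paper intends: identify the occupancy count of a fixed subsquare as Binomial$(n, (\ln n)/n)$ with mean $\mu = \ln n$, apply an upper-tail Chernoff bound to get a polynomially small probability, and union-bound over the $n/\ln n$ subsquares. The paper's Exercise~\ref{exer:k77b} suggests the form $\prob{X \ge t} \le 2^{-t}$ for $t \ge 6\mu$ (giving $n^{-6}$ per cell), while you use the $(e\mu/k)^k$ form (giving roughly $n^{-10}$); this is only a cosmetic difference in which standard Chernoff variant is invoked.
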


It is also easy to bound the length of the tour~$T_0$ of the
representative points~$R$ in the Stitch algorithm (see
Exercise~\ref{exer:k77c}). 
\begin{lemma}\label{l:k77c}
There is a constant $c_2$ such that, for every input, the length of the
tour~$T_0$ in the Stitch algorithm is at most
\[
c_2 \cdot \sqrt{s} = c_2 \cdot \sqrt{\frac{n}{\ln n}}.
\]
\end{lemma}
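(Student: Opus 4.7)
The tour $T_0$ visits one representative per non-empty subsquare in a boustrophedon order over a $\sqrt{s} \times \sqrt{s}$ grid of subsquares, each of side length $1/\sqrt{s}$. My plan is to separate the edges of $T_0$ into intra-row edges (those connecting two representatives whose subsquares lie in a common row of the grid) and inter-row edges (the transitions between consecutive rows, plus the single closing edge back to the start), and to bound the total length contributed by each group separately.

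For the intra-row edges within a given row, all the relevant representatives lie in a horizontal strip of height $1/\sqrt{s}$ and width $1$, and the boustrophedon ordering visits them in monotone order of $x$-coordinate (either left-to-right or right-to-left). Hence the sum of horizontal components of intra-row edges in a single row is at most $1$, and across all $\sqrt{s}$ rows it is at most $\sqrt{s}$. The vertical component of each intra-row edge is at most the strip height $1/\sqrt{s}$, and there are at most $s$ such edges in total (at most one per subsquare), so their vertical contribution is also at most $\sqrt{s}$.

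For the inter-row transitions between consecutive rows, there are at most $\sqrt{s} - 1$ of them; each joins points in two adjacent rows, so has vertical component at most $2/\sqrt{s}$ and horizontal component at most $1$. Summed, these contribute at most $\sqrt{s}$ horizontally and at most $2$ vertically. The single closing edge has length at most $\sqrt{2}$, since both its endpoints lie in the unit square.

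Combining the four contributions yields a total length of $O(\sqrt{s}) = O\bigl(\sqrt{n/\ln n}\bigr)$, which proves the lemma for an absolute constant $c_2$. There is no genuine obstacle here; the estimate is routine, and the only conceptual point worth noting is that the boustrophedon ordering is precisely what forces the intra-row horizontal contributions within each row to telescope down to at most the row width.
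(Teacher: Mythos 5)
Your proof is correct, and it is essentially the intended argument for this lemma (which the paper leaves as Exercise~\ref{exer:k77c}): decompose $T_0$ into intra-row, inter-row, and closing edges, and use the boustrophedon order so that the intra-row horizontal movement in each row telescopes to at most the unit width. One small imprecision: if some rows of subsquares contain no representatives, an inter-row transition may join points in non-adjacent rows, so the per-edge bound of $2/\sqrt{s}$ on its vertical component need not hold. The total vertical contribution of all inter-row edges is still at most~$2$, though, because these lengths telescope across the (at most $\sqrt{s}$) rows actually visited, so the final $O(\sqrt{s})$ bound is unaffected.
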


The key lemma states that an optimal tour can be massaged into
subtours for all of the subsquares without much additional cost.
\begin{lemma}\label{l:patch}
Let~$T^*$ denote an optimal tour of the~$n$ input points, and
let~$L_i$ denote the length of the portion of~$T^*$ that lies within
the subsquare~$i \in \{1,2,\ldots,s\}$ defined by the Stitch algorithm.
For every subsquare~$i=1,2,\ldots,s$,
there exists a tour of the points~$P_i$ in the subsquare of length at
most
\begin{equation}\label{eq:k77}
L_i + 6\sqrt{\frac{\ln n}{n}}.
\end{equation}
\end{lemma}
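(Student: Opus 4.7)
The plan is to ``patch'' the portions of $T^*$ lying in subsquare~$i$ into a closed tour of $P_i$ by adding a controlled amount of length along the subsquare's boundary, and then apply the Eulerian-shortcut trick of Footnote~\ref{foot:euler}.

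First, I would decompose the intersection of $T^*$ with the closed subsquare~$i$ into its maximal subpaths $\gamma_1,\ldots,\gamma_m$. By maximality, each $\gamma_j$ has both endpoints on the boundary $\partial$ of the subsquare (a cycle of perimeter $4\sqrt{(\ln n)/n}$); every point of $P_i$ lies on some $\gamma_j$ since $T^*$ visits every input point; and the total length of the $\gamma_j$'s is exactly $L_i$.

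Next I would build an auxiliary multigraph $G$ on $P_i$ together with the $2m$ boundary endpoints of the $\gamma_j$'s, using three groups of edges: (i) the subpaths $\gamma_1,\ldots,\gamma_m$ themselves, contributing $L_i$ to the total weight; (ii) the full boundary $\partial$, contributing $4\sqrt{(\ln n)/n}$; and (iii) a perfect matching $M$ on the $2m$ endpoints. To choose $M$, label the endpoints $p_1,\ldots,p_{2m}$ in cyclic order around $\partial$ and consider the two alternating matchings $\{(p_1,p_2),(p_3,p_4),\ldots,(p_{2m-1},p_{2m})\}$ and $\{(p_2,p_3),(p_4,p_5),\ldots,(p_{2m},p_1)\}$. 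The boundary arcs between cyclically adjacent pairs partition $\partial$ exactly once, so the two matchings (with edge costs taken as the short arcs, or, no larger, as straight-line distances) have combined length $4\sqrt{(\ln n)/n}$; take $M$ to be the cheaper of the two, of length at most $2\sqrt{(\ln n)/n}$.

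Then I would verify that $G$ admits a closed Eulerian walk: an interior vertex of a $\gamma_j$ has degree $2$; a boundary vertex other than a $p_j$ has degree $2$ from $\partial$; each $p_j$ has degree $2{+}1{+}1=4$, from $\partial$, its subpath, and $M$. All degrees are even, and $G$ is connected because $\partial$ spans all endpoints. Shortcutting the Eulerian walk yields a tour of $P_i$ whose length is at most the total edge weight of $G$, namely $L_i + 4\sqrt{(\ln n)/n} + 2\sqrt{(\ln n)/n} = L_i + 6\sqrt{(\ln n)/n}$, as claimed. The main obstacle is pinning down the constant $6$: naive patching choices (e.g., routing each endpoint to the nearest corner) yield worse constants, and it is the ``take the cheaper of the two alternating matchings'' trick that gives the tight additive $2\sqrt{(\ln n)/n}$, which combined with the perimeter $4\sqrt{(\ln n)/n}$ yields exactly the stated bound. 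Degenerate cases (e.g.\ $m=0$ with $P_i=\emptyset$, or coincident subpath endpoints) are either vacuous or handled by an infinitesimal perturbation with no effect on lengths.
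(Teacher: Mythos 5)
Your proof is correct and follows essentially the same approach as the paper's: form a connected Eulerian multigraph from the interior portions of $T^*$ together with boundary-based matchings on the crossing points, and shortcut the Eulerian walk to a tour of $P_i$. The only cosmetic difference is that you add the full boundary $\partial$ plus one copy of the cheaper alternating matching, whereas the paper adds both alternating matchings plus a second copy of the cheaper one; viewed as arcs these are the same multiset of edges and yield the identical $4\sqrt{(\ln n)/n}+2\sqrt{(\ln n)/n}$ overhead.
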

The key point in Lemma~\ref{l:patch} is that the upper bound
in~\eqref{eq:k77} depends only on the size of the square, and
not on the number of times that the optimal tour~$T^*$ crosses its
boundaries.

Before proving Lemma~\ref{l:patch}, we observe that
Lemmas~\ref{l:k77a}--\ref{l:patch} easily imply Theorem~\ref{t:k77}.
Indeed, with high probability:
\begin{enumerate} 

\item The optimal tour has length~$L^* \ge c_1 \sqrt{n}$.

\item Every subsquare in the Stitch algorithm contains at most $6 \ln
  n$ points, and hence the algorithm computes an optimal tour of the
  points in each subsquare (with length at most~\eqref{eq:k77}).

\item Thus, recalling that~$s = \tfrac{n}{\ln n}$, the total length of
  Stitch's tour is at most
\[
\sum_{i=1}^s \left( L_i +
6\sqrt{\frac{\ln n}{n}}
 \right)
+
c_2 \cdot \sqrt{\frac{n}{\ln n}} = L^* + O\left( \sqrt{\frac{n}{\ln n}}
\right) = (1+o(1)) \cdot L^*.
\]

\end{enumerate}

Finally, we prove Lemma~\ref{l:patch}.

\begin{proof} (Lemma~\ref{l:patch})
Fix a subsquare~$i$ with a non-empty set~$P_i$ of points.  The optimal
tour~$T^*$ visits every point in~$P_i$ while crossing the boundary of
the subsquare an even number~$2t$ of times; denote these
crossing points by~$Q_i=\{y_1,y_2,\ldots,y_{2t}\}$, indexed in clockwise order
around the subsquare's perimeter (starting from the lower left
corner).  Now form a connected Eulerian multi-graph~$G=(V,E)$ 
with vertices~$V = P_i \cup Q_i$ by adding the following edges:
\begin{itemize}

\item Add the portions of~$T^*$ that lie inside the subsquare (giving
  points of~$P_i$ a degree of~2 and points of~$Q_i$ a degree of~1).

\item Let~$M_1$ (respectively, $M_2$) denote the perfect matching
  of~$Q_i$ that matches each $y_j$ with $j$ odd (respectively, with $j$
  even) to $y_{j+1}$.  (In~$M_2$, $y_{2t}$ is matched with~$y_1$.)
Add two copies of the cheaper matching to the edge set~$E$ and one
copy of the more expensive matching (boosting the degree of points
of~$Q_i$ to~4 while also ensuring connectivity).

\end{itemize}
The total length of the edges contributed by the first ingredient
is~$L_i$.  The total length of the edges in $M_1 \cup M_2$ is at most
the perimeter of the subsquare, which is $4\sqrt{\tfrac{\ln n}{n}}$.
The second copy of the cheaper matching adds at
most~$2\sqrt{\tfrac{\ln n}{n}}$ to the total length of the edges
in~$G$.  
As in footnote~12, 
because~$G$ is connected and
Eulerian, we can extract from it a tour of $P_i \cup Q_i$ (and hence
of~$P_i$) that has total length at most that of the edges of~$G$, which
is at most $L_i + 6\sqrt{\tfrac{\ln n}{n}}$.
\end{proof}

\subsection{Discussion}\label{ss:disc}

To what extent are the
two divide-and-conquer algorithms of this section 
tailored to the distributional assumption that the input points are
drawn independently and uniformly at random from the unit square?  For
the convex hull algorithm in Section~\ref{ss:ch}, the consequence of
an incorrect distributional assumption is mild; its worst-case running
time is governed by the recurrence~$T(n) \le 2T(\tfrac{n}{2})+O(n)$
and hence is~$O(n \log n)$, which is close to linear.  Also, analogs
of Lemma~\ref{l:ch} (and hence Theorem~\ref{t:ch}) can be shown to
hold for a number of other distributions.

The Stitch algorithm in Section~\ref{ss:tsp}, with its fixed
dissection of the unit square into equal-size subsquares, may appear
hopelessly tied to the assumption of a uniform distribution.  But
minor modifications to it result in more robust algorithms, for
example by using an {\em adaptive} dissection, which recursively
divides each square along either the median $x$-coordinate or the
median $y$-coordinate of the points in the square.  Indeed, this idea
paved the way for later algorithms that obtained polynomial-time
approximation schemes (i.e., $(1+\eps)$-approximations for arbitrarily
small constant $\eps$) even for the {\em worst-case} version of 
Euclidean TSP (see the Notes).

Zooming out, our discussion of these two examples touches on one of the
biggest risks of
average-case analysis: distributional assumptions can lead to
algorithms that are unduly tailored to the assumptions.  
On the other hand, even when this is the case, the high-level ideas behind
the algorithms can prove useful much more broadly.

\section{Random Graphs and Planted Models}\label{s:graph}

Most of our average-case models so far concern random numerical data.
This section studies random combinatorial structures, and specifically
different probability distributions over graphs.

\subsection{Erd\H{o}s-R\'{e}nyi Random Graphs}\label{ss:er}

This section reviews the most well-studied model of random graphs, the
{\em Erd\H{o}s-R\'enyi} random graph model.  This model is a family
$\{ \G_{n,p} \}$ of distributions, indexed by the number~$n$ of
vertices and the edge density~$p$.  A sample from the distribution
$\G_{n,p}$ is a graph~$G=(V,E)$ with $|V|=n$ and each of the
$\binom{n}{2}$ possible edges present independently with
probability~$p$.  The special case of $p=\tfrac{1}{2}$ is the uniform
distribution over all $n$-vertex graphs.  This is an example of an
``oblivious random model,'' meaning that it is defined independently of
any particular optimization problem.

The assumption of uniformly random data may have felt like cheating
already in our previous examples,
but it is
particularly problematic for many computational problems on graphs.
Not only is this distributional assumption extremely specific, it also
fails to meaningfully differentiate between different
algorithms.\footnote{It also fails to replicate the statistical
  properties commonly observed in ``real-world'' graphs; see
  Chapter~28 for further discussion.}
We illustrate this point with two problems that are discussed at
length in Chapters~9 and~10.

\paragraph{Example: Minimum bisection.}
In the {\em graph bisection} problem,
the input is an undirected graph $G=(V,E)$ with an even
number of vertices, and the goal is to identify a bisection
(i.e., a partition of~$V$ into two equal-size groups)
with the fewest number of crossing edges.

To see why this problem is algorithmically uninteresting in the
Erd\H{o}s-R\'enyi random graph model, take $p=\tfrac{1}{2}$ and let~$n$
tend to infinity.  In a random sample from~$\G_{n,p}$,
for every bisection~$(S,\bar{S})$ of the set~$V$ of~$n$ vertices,
the expected number of edges of~$E$ crossing it is $\tfrac{n^2}{8}$.  
A straightforward application of the Chernoff bound
shows that, with probability~$1-o(1)$ as $n \rightarrow \infty$, the
number of edges crossing {\em every} bisection is
$(1 \pm o(1)) \cdot \tfrac{n^2}{8}$ (Exercise~\ref{exer:bisection}).
Thus even an algorithm that
computes a {\em maximum} bisection is an almost optimal algorithm for computing a minimum bisection!

\paragraph{Example: Maximum clique.}
In the {\em maximum clique} problem, the goal (given an undirected
graph) is to identify the largest subset of vertices that are
mutually adjacent.  In a random graph in the $\G_{n,1/2}$
model, the size of the maximum clique is very likely to be $\approx
2 \log_2 n$.\footnote{In fact, the size of the maximum clique turns
  out to be incredibly concentrated; see the Notes.}
To see heuristically why this is true, note that for an integer $k$, the
expected number of cliques on $k$ vertices in a random graph of
$\G_{n,1/2}$ is exactly
$$
\binom{n}{k} 2^{-\binom{k}{2}} \approx n^k 2^{-k^2/2},
$$
which is~1 precisely when $k = 2 \log_2 n$.  That is, $2 \log_2
n$ is approximately the largest~$k$ for which we expect to see at least one
$k$-clique.

On the other hand, while there are several polynomial-time algorithms
(including the obvious greedy algorithm) that compute, with high
probability, a clique of size $\approx \log_2 n$ in a random graph
from $\G_{n,1/2}$, no such algorithm is known to do better.
The Erd\H{o}s-R\'enyi model fails to
distinguish between different efficient heuristics for the Maximum
Clique problem.

\subsection{Planted Graph Models}\label{ss:planted}

Chapters~5 and~6 study deterministic models of data in which the
optimal solution to an optimization problem must be ``clearly
optimal'' in some sense, with the motivation of zeroing in on the
instances with a ``meaningful'' solution (such as an informative
clustering of data points).  {\em Planted graph models} implement the
same stability idea in the context of random graphs, by positing
probability distributions over inputs which generate (with high
probability) graphs in which an optimal solution ``sticks out.''  The
goal is then to devise a polynomial-time algorithm that recovers the
optimal solution with high probability, under the weakest-possible
assumptions on the input distribution.  Unlike an oblivious random
model such as the Erd\H{o}s-R\'enyi model, planted models are generally
defined with a particular computational problem in mind.

Algorithms for planted models generally fall into three categories,
listed roughly in order of increasing complexity and power.
\begin{enumerate}

\item {\em Combinatorial approaches.}  We leave the term ``combinatorial''
  safely undefined, but basically it refers to algorithms that work
  directly with the graph, rather than resorting to any continuous
  methods.   For example, an algorithm that looks only at vertex
  degrees,   subgraphs, shortest paths, etc., would be considered
  combinatorial.  

\item {\em Spectral algorithms.}  Here ``spectral'' means an
  algorithm that
  computes and uses the eigenvectors of a suitable
  matrix derived from the input graph.  Spectral algorithms often
  achieve optimal recovery guarantees for planted models.

\item {\em Semidefinite programming (SDP).} Algorithms that use
  semidefinite programming have proved useful for extending guarantees
  for spectral algorithms in planted models to hold also in semi-random
  models (see Chapters~9 and 10).

\end{enumerate}

\vspace{-.5\baselineskip}
\paragraph{Example: Planted bisection.}
In the planted bisection problem, a graph is generated according to
the following random process (for a fixed vertex set $V$, with $|V|$
even, and parameters $p,q \in [0,1]$):
\begin{enumerate}

\item Choose a partition $(S,T)$ of $V$ with $|S|=|T|$ uniformly at
  random.

\item Independently for each pair $(i,j)$ of vertices inside the same
  cluster ($S$ or $T$), include the edge $(i,j)$ with probability $p$.

\item Independently for each pair $(i,j)$ of vertices in different clusters,
include the edge $(i,j)$ with probability $q$.\footnote{This model is
  a special case of the stochastic block model studied in Chapter~10.}

\end{enumerate}
Thus the expected edge density inside the clusters is $p$, and between
the clusters is~$q$.

The difficulty of recovering the planted bisection~$(S,T)$ clearly
depends on the gap between $p$ and $q$.  The problem is impossible if
$p=q$ and trivial if~$p=1$ and~$q=0$.
Thus the key question in
this model is: how big does the gap $p-q$ need to be before exact
recovery is possible in polynomial time (with high probability)?  

When $p$, $q$, and $p-q$ are bounded below by a constant independent
of~$n$, the problem is easily solved by combinatorial approaches
(Exercise~\ref{exer:easy_bisection}); unfortunately,
these do not resemble algorithms that perform well in practice.

We can make the problem more difficult by allowing~$p$, $q$, and $p-q$
to go to~0 with~$n$.
Here, 
semidefinite programming-based algorithms work for an
impressively wide range of parameter values.  For example:
\begin{theorem}[\citet{ABH16,HWX16}]\label{t:sbm}
If~$p = \tfrac{\alpha \ln n}{n}$ and~$q = \tfrac{\beta \ln n}{n}$ with
$\alpha > \beta$, then:
\begin{itemize}

\item [(a)] If $\sqrt{\alpha}-\sqrt{\beta} \ge \sqrt{2}$,
there is a polynomial-time algorithm that
  recovers the planted partition~$(S,T)$ with probability~$1-o(1)$ as
  $n \rightarrow \infty$.

\item [(b)] If $\sqrt{\alpha} - \sqrt{\beta} < \sqrt{2}$, then no
  algorithm recovers the planted partition with constant probability
  as $n \rightarrow \infty$.

\end{itemize}
\end{theorem}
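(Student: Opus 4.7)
The plan is to attack the two directions separately: Part (b) by an information-theoretic genie argument that reduces exact recovery to a single-vertex hypothesis test, and Part (a) by exhibiting a dual certificate for an SDP relaxation of the maximum-likelihood estimator. Both directions hinge on the same Chernoff/Hellinger calculation for two Poissons of means $\alpha\ln n/2$ and $\beta\ln n/2$, whose optimal-test error probability is $n^{-(\sqrt{\alpha}-\sqrt{\beta})^2/2}$; the threshold $\sqrt{\alpha}-\sqrt{\beta}=\sqrt{2}$ is exactly where this matches the union-bound scale $1/n$.

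For Part (b), I would imagine an oracle that reveals the true labels of every vertex except a distinguished vertex $v$. The MAP classifier for $v$ then compares the number of edges from $v$ into the two revealed clusters, i.e., tests $\mathrm{Bin}(n/2-1,p)$ versus $\mathrm{Bin}(n/2-1,q)$; under the $\ln n/n$ scaling these are well-approximated by Poissons with means $\alpha\ln n/2$ and $\beta\ln n/2$, so the optimal test errs with probability $\Theta(n^{-(\sqrt{\alpha}-\sqrt{\beta})^2/2})$. When $\sqrt{\alpha}-\sqrt{\beta}<\sqrt{2}$, this exponent is strictly smaller than $1$, the expected number of vertices that the genie-aided optimal classifier misclassifies is $\omega(1)$, and a Poissonization (or second-moment) argument using approximate independence of the individual tests upgrades this to: with probability $1-o(1)$ at least one vertex is genuinely ambiguous. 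No algorithm can do better than the oracle-aided MAP rule, so no algorithm achieves exact recovery with constant probability.

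For Part (a), I would analyze the SDP relaxation of the max-likelihood estimator: writing $\sigma^\ast\in\{\pm 1\}^n$ for the planted partition, lift $\sigma\sigma^{\!\top}$ to a PSD matrix $X$ with $X_{ii}=1$ and $\mathbf{1}^{\!\top}X\mathbf{1}=0$, and maximize $\langle A,X\rangle$. The standard dual-certificate strategy is to construct a diagonal matrix $D$ and a scalar $\lambda$ so that $S:=D-\lambda\mathbf{1}\mathbf{1}^{\!\top}-(A-\tfrac{p+q}{2}(J-I))\succeq 0$ with $S\sigma^\ast=0$; existence of such $(D,\lambda)$ implies that $\sigma^\ast(\sigma^\ast)^{\!\top}$ is the unique SDP optimum and hence that rounding recovers $(S,T)$. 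Choosing $D_{ii}$ as the signed degree gap (in-cluster minus out-cluster degree, with the planted sign), two ingredients are needed: a matrix-Bernstein (or trace-method) bound showing $\|A-\mathbb{E}A\|=O(\sqrt{\ln n})$ on $(\sigma^\ast)^\perp$, and a vertex-wise Chernoff bound—the \emph{same} Poisson tail as in Part (b)—showing $\min_i D_{ii}>0$ with high probability. A union bound over $n$ vertices succeeds precisely when $\sqrt{\alpha}-\sqrt{\beta}>\sqrt{2}$, which is where the tail exponent beats $1/n$.

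The main obstacle in both directions is sharpness: the concentration estimates must reproduce the exact exponent $(\sqrt{\alpha}-\sqrt{\beta})^2/2$ rather than a constant-factor relaxation. For Part (b) this means carrying out the binomial-to-Poisson approximation without losing constants in the log-likelihood ratio, and then controlling the dependence between the $n$ genie-aided tests carefully enough that a Paley-Zygmund-type lower bound applies. For Part (a) the difficulty is symmetric: the vertex-degree Chernoff bound must be tight to the same exponent so that the union bound over $\min_i D_{ii}>0$ is just barely satisfied, while simultaneously the spectral-norm bound on the centered adjacency matrix must be uniform (truncating high-degree vertices to handle the heavy-tailed rows arising from $p,q=\Theta(\ln n/n)$). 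Matching these two bounds at the same threshold is what makes the result a true phase transition.
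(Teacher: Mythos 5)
The paper does not actually prove Theorem~\ref{t:sbm}; it is stated as a citation to the references \citet{ABH16} and \citet{HWX16}, with the division of credit spelled out in the Notes (part~(b) and a weaker version of part~(a) from \citet{ABH16}, the sharp version of part~(a) from \citet{HWX16}). So there is no in-paper argument to compare against. That said, your sketch is a faithful outline of what those references actually do: the genie-aided per-vertex hypothesis test with the two-Poisson Bhattacharyya/Chernoff exponent $(\sqrt{\alpha}-\sqrt{\beta})^2/2$ is the core of the impossibility direction, and the SDP-with-dual-certificate analysis, combining a spectral bound on the centered adjacency matrix with a union bound over per-vertex degree-gap events, is exactly the \citet{HWX16} strategy for achievability. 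Your identification of where the difficulties lie---controlling dependence across the $n$ genie tests for the second-moment upgrade, and matching the degree Chernoff exponent to the spectral bound for the dual certificate---is also accurate.

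One fine point worth flagging: your heuristic (``the union bound succeeds precisely when $\sqrt{\alpha}-\sqrt{\beta}>\sqrt{2}$'') naturally produces the achievability result only with strict inequality, which is the weaker version due to \citet{ABH16}. Handling the boundary case $\sqrt{\alpha}-\sqrt{\beta}=\sqrt{2}$ as stated in part~(a) is precisely the extra delicacy supplied by \citet{HWX16}, and a pure union-bound count of expected misclassified vertices, which comes out $O(1)$ rather than $o(1)$ at equality, does not by itself close this gap; the SDP certificate has to be analyzed with enough care to pick up the subpolynomial correction.
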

In this parameter regime, semidefinite programming algorithms
provably achieve
information-theoretically optimal recovery guarantees.  
Thus, switching from the $p,q,p-q=\Omega(1)$ parameter regime to the
$p,q,p-q=o(1)$ regime is valuable not because we literally
believe that the latter is more faithful to ``real-world''
instances, but rather because it encourages better algorithm design.

\paragraph{Example: Planted clique.}
The {\em planted clique} problem with parameters~$k$ and $n$ concerns
the following distribution over graphs.
\begin{enumerate}

\item Fix a vertex set $V$ with $n$ vertices.  Sample a graph from~$\G_{n,1/2}$:
Independently for each pair $(i,j)$ of vertices,
include the edge $(i,j)$ with probability $\tfrac{1}{2}$.

\item Choose a random subset $Q \subseteq V$ of $k$ vertices.

\item Add all remaining edges between pairs of vertices in $Q$.

\end{enumerate}
Once~$k$ is significantly bigger than~$\approx 2 \log_2 n$, the likely
size of a maximum clique in a random graph from $\G_{n,1/2}$,
the planted clique $Q$ is with high probability the maximum clique of
the graph.  How big does $k$ need to be before it becomes visible to
a polynomial-time algorithm?  

When~$k = \Omega(\sqrt{n \log n})$, the problem is trivial, with the
$k$ highest-degree vertices constituting the planted clique~$Q$.
To see why this
is true, think first about the sampled Erd\H{o}s-R\'enyi random
graph, before the clique~$Q$ is planted.  The expected degree of each
vertex is $\approx n/2$, with standard deviation $\approx \sqrt{n}/2$.
Textbook large deviation inequalities show that, with high
probability, the degree of every vertex is within
$\approx \sqrt{\ln n}$ standard deviations of its expectation
(Figure~\ref{f:catapult}).  Planting a clique~$Q$ of size
$a \sqrt{n \log n}$, for a sufficiently large constant~$a$, then
boosts the degrees of all of the clique vertices enough that they
catapult past the degrees of all of the non-clique vertices.

\begin{figure}
\begin{center}
\includegraphics[width=.75\textwidth]{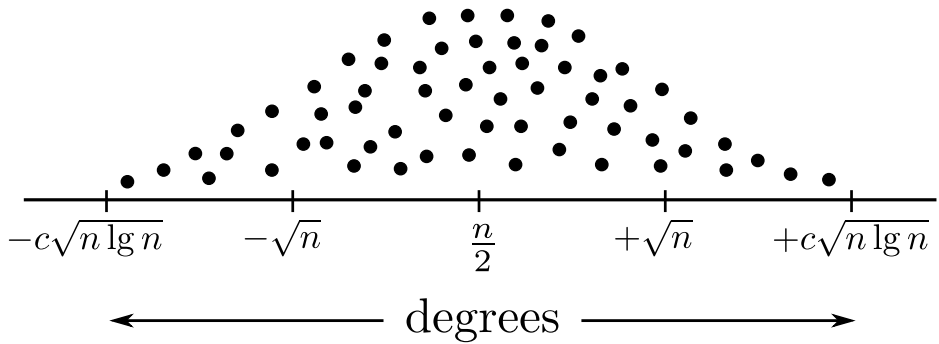}
\caption{Degree distribution of an Erd\H{o}s-R\'enyi random
graph with edge density $\tfrac{1}{2}$,
before planting
the $k$-clique~$Q$. 
If $k = \Omega(\sqrt{n \lg n})$, then the planted clique will consist
  of the $k$ vertices with the highest degrees.}\label{f:catapult} 
\end{center}
\end{figure}

The ``highest degrees'' algorithm is not very useful in practice.
What went wrong?  The same thing that often goes wrong with pure
average-case analysis---the solution is brittle and overly tailored to
a specific distributional assumption.  How can we change the input
model to encourage the design of algorithms with more robust
guarantees?  

One idea is to mimic what worked well for the planted bisection problem,
and to study a more difficult parameter regime that forces us to
develop more useful algorithms.  For the planted clique problem, there
are non-trivial algorithms, including spectral algorithms, that
recover the planted clique~$Q$ with high probability provided $k=\Omega(\sqrt{n})$ (see the Notes).

\subsection{Discussion}

There is a happy ending to the study of both the planted
bisection and planted clique problems: with the right choice
of parameter regimes, these models drive us toward non-trivial
algorithms that might plausibly be useful starting points for the
design of practical algorithms.
Still, both results seem to emerge from ``threading the needle'' in
the parameter space.
Could there be a
better alternative, in the form of input models that explicitly
encourage the design of robustly good algorithms?

\section{Robust Distributional Analysis}\label{s:rda}

Many of the remaining chapters in this book pursue different hybrids
of worst- and average-case analysis, in search of a ``sweet spot'' for
algorithm analysis that both encourages robustly good algorithms (like
in worst-case analysis) and allows for strong provable guarantees
(like in average-case analysis).  Most of these models assume that
there is in fact a probability distribution over inputs (as in
average-case analysis), but that this distribution is a priori {\em
  unknown} to an algorithm.  The goal is then to design algorithms
that work well no matter what the input distribution is (perhaps with
some restrictions on the class of possible distributions).  Indeed,
several of the average-case guarantees in this chapter can be viewed
as applying simultaneously (i.e., in the worst case) across a
restricted but still infinite family of input distributions:
\begin{itemize}

\item The $\tfrac{1}{2}$-approximation in the prophet inequality
  (Theorem~\ref{t:pi}) for a threshold-$t$ rule applies simultaneously
  to all distribution sequences $D_1,D_2,\ldots,D_n$ such that
  $\prob[\prizes \sim \gdists]{\max_i v_i \ge t} = \tfrac{1}{2}$
(e.g., all possible reorderings of one such sequence).

\item The guarantees for our algorithms for the bin packing
  (Theorem~\ref{t:f80}), convex hull (Theorem~\ref{t:ch}), and
  Euclidean TSP (Theorem~\ref{t:k77}) problems 
hold  more generally for 
all input distributions that are sufficiently close to uniform.

\end{itemize}
The general research agenda in robust distributional analysis is to
prove approximate optimality guarantees for algorithms for as many
different computational problems and as rich a class of input
distributions as possible.  Work in the area can be divided into two
categories, both well represented in this book, depending on whether
an algorithm observes one or many
samples from the unknown input distribution.  We conclude this chapter
with an overview of what's to come.

\subsection{Simultaneous Near-Optimality}

In {\em single-sample} models, an algorithm is designed with knowledge
only of a class~$\D$ of possible input distributions, and receives
only a single input drawn from an unknown and adversarially chosen
distribution from~$\D$.  In these models, the algorithm cannot hope to
learn anything non-trivial about the input distribution.
Instead, the goal is to design an algorithm that, for
every input distribution $D \in \D$, has expected performance close to
that of the optimal algorithm specifically tailored for~$D$.  Examples
include:
\begin{itemize}

\item The semi-random models in Chapters~9--11 and~17 and the smoothed
  analysis models in Chapters~13--15 and~19.  In these models, nature
  and an adversary collaborate to produce an input, and each fixed
  adversary strategy induces a particular input distribution.  
Performing well with respect to the adversary in these models is
equivalent to performing well simultaneously across all of the induced
input distributions.

\item The effectiveness of simple hash functions with pseudorandom
  data (Chapter~26).  The main result in this chapter is a guarantee for
  universal hashing that holds simultaneously across all data
  distributions with sufficient entropy.

\item Prior-independent auctions (Chapter~27), which are auctions that
  achieve near-optimal expected revenue simultaneously across a wide
  class of valuation distributions.

\end{itemize}

\subsection{Learning a Near-Optimal Solution}

In {\em multi-sample} models, an algorithm observes multiple samples
from an unknown input distribution~$D \in \D$, and the goal
is to efficiently identify a near-optimal algorithm for~$D$ from as
few samples as possible.  Examples include:
\begin{itemize}

\item Self-improving algorithms (Chapter~12) and data-driven
  algorithm design (Chapter~29), in which the goal is to design an
  algorithm that, when presented with independent
  samples from an unknown input distribution, quickly converges to an
approximately best-in-class algorithm for that distribution.

\item Supervised learning (Chapters~16 and 22), in which the goal is to
  identify the expected loss-minimizing hypothesis (from a given
  hypothesis   class) for an unknown data distribution given samples from that
  distribution.

\item Distribution testing (Chapter~23), in which the goal is to make
  accurate inferences about an unknown distribution from a limited
  number of samples.

\end{itemize}

\section{Notes}

The prophet inequality (Theorem~\ref{t:pi}) is due to \citet{SC84}.
The pros and cons of threshold rules versus optimal online algorithms
are discussed also by \citet{hartline}.
QuickSort and its original analysis are due to \citet{hoare}.
The~$(1-\alpha)^{-2}$ bound for linear probing with load~$\alpha$ and
random data, as well as the corresponding quote in
Section~\ref{ss:knuth}, are in \citet{knuth}.
A good (if outdated) entry point to the literature on bin packing is
\citet{CGJ96}.  The lower bound for the FFD algorithm in
Exercise~\ref{exer:ffd} is from \citet{J+74}.  The first upper bound
of the form $\tfrac{11}{9} \cdot OPT + O(1)$ for
the number of bins used by the FFD algorithm, where~$OPT$ denotes the
minimum-possible number of bins, is due to \citet{J73}.  The exact
worst-case bound for FFD was pinned down recently by \citet{DLHY13}.
The average-case guarantee in Theorem~\ref{t:f80} is a variation on
one by \citet{F80}, who proved that the expected difference between the
number of bins used by FFD and an optimal solution is~$O(n^{2/3})$.  A
more sophisticated argument gives a tight bound of~$\Theta(n^{1/2})$
on this expectation \citep{C+91}.

The linear expected time algorithm for 2D convex hulls
(Theorem~\ref{t:ch}) is by \cite{BS78}.
Lemma~\ref{l:ch} was first proved by \citet{RS63}; the proof outlined
here follows~\citet{sariel}.  Exercise~\ref{exer:graham}
is solved by \citet{A79}.  The asymptotic optimality of the Stitch
algorithm for Euclidean TSP (Theorem~\ref{t:k77}) is due to
\citet{K77}, who also gave an alternative solution based on the
adaptive dissections mentioned in Section~\ref{ss:disc}.  
A good general reference for this topic is \citet{KS85}.
The worst-case approximation schemes mentioned in
Section~\ref{ss:disc} are due to \citet{A98} and \citet{M99}.

The Erd\H{o}s-R\'enyi random graph model is from \citet{ER60}.
The size of the maximum clique in a random graph drawn
from~$\G_{n,1/2}$ was characterized by \citet{M76}; with high
probability it is either $k$ or $k+1$, where $k$ is an integer roughly
equal to $2 \log_2 n$.  \citet{GM75} proved that the greedy algorithm
finds, with high probability, a clique of size roughly $\log_2 n$ in a
random graph from $\G_{n,1/2}$.
The planted bisection model described here was proposed by
\citet{bui} and is also a special case of the stochastic block model 
defined by \cite{HLL83}.
Part~(b) of Theorem~\ref{t:sbm} and a weaker version of part~(a) were
proved by \citet{ABH16}; the stated version of part~(a) is due to
\citet{HWX16}.
The planted clique model was suggested by \citet{J92}.
\citet{kucera} noted that the ``top-$k$ degrees'' algorithm works with
high probability when $k = \Omega(\sqrt{n \log n})$.  The first
polynomial-time algorithm for the planted clique problem
with~$k=O(\sqrt{n})$ was the spectral algorithm
of \citet{AKS98}.  \citet{sos} supplied evidence, in the form of
a sum-of-squares lower bound, that the planted clique problem is
intractable when~$k=o(\sqrt{n})$.

The versions of the Chernoff bound stated in
Exercises~\ref{exer:f80a}(a) and~\ref{exer:k77b}
can be found, for example, in \citet{MU17}.

\section*{Acknowledgments}

I thank Anupam Gupta, C.\ Seshadhri, and Sahil Singla for helpful
comments on a preliminary draft of this chapter.

\section*{Exercises}

\begin{enumerate}

\item \label{exer:prophetmass}
Extend the prophet inequality (Theorem~\ref{t:pi}) to the case in which
there is no threshold~$t$ with $q(t) = \tfrac{1}{2}$, where $q(t)$ is
the probability that no prize meets the threshold.  

\vspace{.25\baselineskip}
\noindent 
[Hint: Define $t$ such that $\Pr[\pi_i > t \mbox{ for all $i$}] \le
\tfrac{1}{2} \le \Pr[\pi_i \ge t \mbox{ for all $i$}]$.  Show that at
least one of the two corresponding strategies---either taking the
first prize with value at least~$t$, or the first with value
exceeding~$t$---satisfies the requirement.]

\item \label{exer:opt_vs_prophet}
The prophet inequality (Theorem~\ref{t:pi}) provides an approximation
guarantee of~$\tfrac{1}{2}$ relative to the expected prize value obtained
by a prophet, which is at least (and possibly more than) the expected
prize value obtained by an optimal online algorithm.  
Show by examples that the latter quantity can range from 50\%
to 100\% of the former.

\item \label{exer:ffd} Prove that for a bin packing instance 
consisting of~6 items with size~$\tfrac{1}{2}+\eps$, 6 items with size
$\tfrac{1}{4}+2\eps$, 6 
jobs with size $\tfrac{1}{4}+\eps$, and 12 items with
size~$\tfrac{1}{4}-2\eps$, the first-fit decreasing algorithm uses~11
bins and an optimal solution uses~9 bins.

\item \label{exer:f80a}
This exercise and the next outline a proof of Theorem~\ref{t:f80}.
Divide the
interval~$[0,1]$ evenly into $n^{1/4}$ intervals, with~$I_j$ denoting
the subinterval $[\tfrac{j-1}{n^{1/4}},\tfrac{j}{n^{1/4}}]$
for~$j=1,2,\ldots,n^{1/4}$.  Let~$P_j$ denote the items with size
in~$I_j$.
\begin{exlist}

\item 
One version of the Chernoff bound states that, for
every sequence $X_1,X_2,\ldots,X_n$ of Bernoulli (0-1) random
variables with means~$p_1,p_2,\ldots,p_n$ and every $\delta \in
(0,1)$,
\[
\prob{|X-\mu| \ge \delta \mu} \le 2e^{-\mu \delta^2/3},
\]
where~$X$ and~$\mu$ denote $\sum_{i=1}^n X_i$ and $\sum_{i=1}^n p_i$,
respectively.
Use this bound to prove that 
\begin{equation}\label{eq:pop}
|P_j| \in \left[ n^{3/4} - \sqrt{n}, n^{3/4} + \sqrt{n} \right] \text{
  for all $j=1,2,\ldots,n^{1/4}$}
\end{equation}
with probability~$1-o(1)$ as $n \rightarrow \infty$.

\item Assuming~\eqref{eq:pop}, prove that the sum~$\sum_{i=1}^n s_i$
  is at least $\tfrac{1}{2}n - c_1 n^{3/4}$ for some constant $c > 0$.
What does this imply about the number of bins used by an optimal
solution?

\item Assuming~\eqref{eq:pop}, prove that in the third step of the TM
  algorithm, every pair of items~$i$ and~$k-i+1$ fits in a single
  bin.

\item Conclude that there is a constant $c_2 > 0$ such that, when property~\eqref{eq:pop} holds, the TM
  algorithm uses at most $\tfrac{1}{2}n + c_2 n^{3/4} = (1+o(1)) \cdot
  OPT$ bins, where~$OPT$ denotes the number of bins used by an optimal
  solution.

\end{exlist}

\item \label{exer:f80b} Prove Lemma~\ref{l:f80}.

\item \label{exer:graham}
Give an algorithm that, given a set~$S$ of~$n$ points from the square
sorted by $x$-coordinate, computes the convex hull of~$S$ in~$O(n)$
time.

\vspace{.25\baselineskip}
\noindent
[Hint: compute the lower and upper parts of the convex hull separately.]

\item \label{exer:ch}
Prove that the convex hull of~$n$ points drawn independently and
uniformly at random from the unit square occupies a~$1-O(\tfrac{\log
  n}{n})$ fraction of the square.

\item \label{exer:k77} Prove Lemma~\ref{l:k77a}.

\vspace{.25\baselineskip}
\noindent
[Hint: Chop the unit square evenly into $n$ subsquares of side
length~$n^{-1/2}$, and each subsquare further into 9 mini-squares of
side length~$\tfrac{1}{3} \cdot n^{-1/2}$.  For a given subsquare, what is the
probability that the input includes one point from its center
mini-square and none from the other~8 mini-squares?]

\item \label{exer:k77b}
Another variation of the Chernoff bound states that, for
every sequence $X_1,X_2,\ldots,X_n$ of Bernoulli (0-1) random
variables with means~$p_1,p_2,\ldots,p_n$ and every $t \ge 6\mu$,
\[
\prob{X \ge t} \le 2^{-t},
\]
where~$X$ and~$\mu$ denote $\sum_{i=1}^n X_i$ and $\sum_{i=1}^n p_i$,
respectively.  Use this bound to prove Lemma~\ref{l:k77b}.

\item \label{exer:k77c} Prove Lemma~\ref{l:k77c}.

\item \label{exer:bisection} Use the Chernoff bound from
  Exercise~\ref{exer:f80a}(a) to prove that, with probability
approaching~1 as $n \rightarrow \infty$, every bisection of a random
graph from $\G_{n,p}$ has $(1\pm o(1)) \cdot \tfrac{n^2}{8}$ crossing
edges.

\item \label{exer:easy_bisection}
Consider the planted bisection problem
with parameters $p = c_1$ and $q
= p - c_2$ for constants $c_1,c_2 > 0$.
Consider the following simple combinatorial algorithm for recovering a
planted bisection:
\begin{itemize}

\item Choose a vertex $v$ arbitrarily.

\item Let $A$ denote the $\tfrac{n}{2}$ vertices that have the
  fewest common neighbors with~$v$.

\item Let $B$ denote the rest of the vertices (including $v$) and
  return $(A,B)$.

\end{itemize}
Prove that, with high probability over the random choice of $G$
(approaching~1 as $n \rightarrow
\infty$), this algorithm exactly recovers the planted bisection.

\vspace{.25\baselineskip}
\noindent
[Hint: compute the expected number of common neighbors for pairs of
vertices on the same and on different sides of the planted partition.
Use the Chernoff bound.]

\item 
Consider the planted clique problem (Section~\ref{ss:planted}) with
planted clique size $k \ge c \log_2 n$ for a sufficiently large
constant $c$.  Design an algorithm that runs in $n^{O(\log n)}$ time
and, with probability $1-o(1)$ as $n
\rightarrow \infty$, recovers the planted clique.

\end{enumerate}

\end{document}